\journal{Games and Economic Behavior}
\journal{arXiv.}
\newcommand{\Example}{Example\xspace}
\newcommand{\examples}{examples\xspace}
\declaretheoremstyle[
spaceabove=2.5ex, 
spacebelow=0ex,
numberwithin=section
]{styleone}
\declaretheorem[style=styleone]{theorem}
\declaretheorem[style=styleone]{proposition}
\newtheorem{lemma}[theorem]{Lemma}
\declaretheoremstyle[
spaceabove=2.5expt, 
spacebelow=0ex,
headfont=\normalfont\bfseries,
notefont=\mdseries, 
notebraces={(}{)},
bodyfont=\normalfont,
postheadspace=1em,
qed=\qedsymbol
]{definitionstyle}
\theoremstyle{definition}
\newtheorem{definition}[theorem]{Definition}
\newtheorem{notation}[theorem]{Notation}
\newtheorem{exm}[theorem]{\Example}
\newmdtheoremenv[innertopmargin=0pt,skipabove=1ex]{problem}{Open Problem}
\newcommand{\qedblack}{\hfill\ensuremath{\blacksquare}}
\newenvironment{challenging}{
\textit{Why this problem is challenging.\xspace}
}{
\hfill \qedblack
}
\newif\ifcomments 
\newif\iffootnote
\let\Footnote\footnote
\renewcommand\footnote[1]{\begingroup\footnotetrue\Footnote{#1}\endgroup}
\newcommand\setupcomments[4]{
\expandafter\def\csname#1says\endcsname##1{\ifcomments{\color{#3}[#2: ##1]}\fi}
\expandafter\def\csname#1predicts\endcsname##1{\ifcomments{\color{#3}[#2 predicts: ##1 \#}\fi}
\expandafter\def\csname#1color\endcsname##1{\ifcomments{\color{#3}##1}\fi}
\expandafter\def\csname#1adds\endcsname##1{\ifcomments{\color{#3}##1}\fi}
\expandafter\def\csname#1rems\endcsname##1{\ifcomments{\color{#3}\st{##1}}\fi}
  \expandafter\def\csname#1box\endcsname##1{\ifcomments\vspace{1ex}\todo[inline,#4]{#2 says: ##1}{}\fi}

  \expandafter\def\csname#1margin\endcsname##1{\ifcomments\marginnote{\todo[inline,#4]{#2 says: ##1}{}}\fi}
  \expandafter\def\csname#1section\endcsname##1{\ifcomments\todo[#4]{#2 says: ##1}{}\fi}
  \expandafter\def\csname#1todo\endcsname##1{\ifcomments\marginnote{\todo[inline]{#2 says: TODO: ##1}{}}\fi}


  \expandafter\def\csname#1demo\endcsname{\ifcomments
\noindent
$\backslash$#1says $\to$ \csname#1says\endcsname{example inline comment from #2.}\\
$\backslash$#1adds $\to$ \csname#1adds\endcsname{example addition by #2.}\\
$\backslash$#1rems $\to$ \csname#1rems\endcsname{example removal by #2.}\\
$\backslash$#1box $\to$ 
    \csname#1box\endcsname{example box comment from #2}
\fi}

}
\definecolor{mintedgreen}{RGB}{8,127,1}
\definecolor{mintedblue}{RGB}{5,2,255}
\definecolor{mintedred}{RGB}{186,33,33}
\newcommand{\easy}{{\color{mintedgreen}[easy]}\hspace{1ex}}
\newcommand{\medium}{{\color{mintedblue}[medium]}\hspace{1ex}}
\newcommand{\difficult}{{\color{mintedred}[difficult]}\hspace{1ex}}
\newcommand{\answer}[1]{(answered in footnote\footnote{#1})}
\newcommand{\proves}[1]{\underset{#1}{\vdash}}
\newcommand{\bxof}[2]{\Box_{#1}\mleft(#2\mright)}
\newcommand{\NN}{\mathbb{N}}
\newcommand{\Oo}{\mathcal{O}}
\newcommand{\Newlist}[4]{
 #1#4\@ifnextchar\bgroup{#2\Newlist{}{#2}{#3}}{#3}
}
\newcommand{\quants}[2]{{\Newlist{\mleft(}{\mright)\mleft(}{\mright)} #1}\mleft(#2\mright)}
\newcommand{\Forall}[2]{\quants{{\forall #1}}{#2}}
\newcommand{\Implies}[1]{\Rightarrow}
\newcommand{\floor}[1]{\lfloor #1 \rfloor}
\newcommand{\VC}{\Verb{C}\xspace}
\newcommand{\VD}{\Verb{D}\xspace}
\newcommand{\VCC}{\Verb{(C,C)}\xspace}
\newcommand{\VCD}{\Verb{(C,D)}\xspace}
\newcommand{\VDC}{\Verb{(D,C)}\xspace}
\newcommand{\VDD}{\Verb{(D,D)}\xspace}
\newcommand{\lob}{L\"{o}b\xspace}
\newcommand{\godel}{G\"{o}del\xspace}
\newcommand{\auxdef}[2]{
  \protected@write\@auxout{}{%
    \global\string\@namedef{#1}{#2}%
    }%
  }
\newcommand{\auxref}[1]{
    \@ifundefined{#1}{??}{\@nameuse{#1}}%
    }
\begin{document}

\begin{frontmatter}



\title{Cooperative and uncooperative institution designs: \\ Surprises and problems in open-source game theory}



\author[inst1]{Andrew Critch} 
\ead{critch@eecs.berkeley.edu}


\author[inst1]{Michael Dennis}
\author[inst1]{Stuart Russell}
\address{Center for Human-Compatible AI, University of California, Berkeley}

\begin{abstract}
It is increasingly possible for real-world agents, such as software-based agents or human institutions, to view the internal programming of other such agents that they interact with.  For instance, a company can read the bylaws of another company, or one software system can read the source code of another.  Game-theoretic equilibria between the designers of such agents are called \emph{program equilibria}, and we call this area \emph{open-source game theory}.

\vspace{1em}
\noindent In this work we demonstrate a series of counterintuitive results on open-source games, which are independent of the programming language in which agents are written.  We show that certain formal institution designs that one might expect to defect against each other will instead turn out to cooperate, or conversely, cooperate when one might expect them to defect.  The results hold in a setting where each institution has full visibility into the other institution's true operating procedures.  We also exhibit \examples and \auxref{probcount} open problems for better understanding these phenomena.
We argue that contemporary game theory remains ill-equipped to study program equilibria, given that even the \emph{outcomes} of single games in open-source settings remain counterintuitive and poorly understood.  Nonetheless, some of these open-source agents exhibit desirable characteristics---e.g., they can unexploitably create incentives for cooperation and legibility from other agents---such that analyzing them could yield considerable benefits.
\end{abstract}



\begin{keyword}
open-source game theory \sep program equilibria \sep commitment games
\PACS 0000 \sep 1111
\MSC 0000 \sep 1111
\end{keyword}

\end{frontmatter}



\section{Introduction}
Numerous sectors of the global economy are positioned for rapid advancements in automation, due to recent progress in artificial intelligence and machine learning \citep{dirican2015impacts,makridakis2017forthcoming,huang2018artificial,wirtz2018brave,hatcher2018survey}.  In fact, many experts speculate that AI technology will eventually outperform humans on a very wide variety of tasks \citep{muller2016future,brynjolfsson2017business,grace2018will}.  
This potentiality means it is important to understand how automated decision-making systems can interact, especially in cases where their behavior might come as a surprise to their designers, such as in the stock market ``flash crash'' of 2010 \citep{madhavan2012exchange}.  

In addition, opportunities abound in the structuring of novel human institutions both for using and for governing AI technology.  There is already a great deal of academic interest in establishing policy and governance norms around the development and deployment of AI \citep{calo2017artificial,gasser2017layered,dafoe2018ai,dwivedi2019artificial,cihon2019standards}.   \citet{dafoe2020open} specifically call for further attention to the risks and benefits of designing AI systems to cooperate with each other.

What body of game-theoretic research and intuitions should be informing the deliberation of policy-makers in the AI governance arena, or of AI researchers in the development and deployment of software-based agents?  Numerous results from classical game theory, bargaining, and mechanism design will no doubt prove useful.  However, as we shall see, novel possibilities arise in strategic interactions between algorithms, some of which are counterintuitive to humans, and some of which might shed new light on existing institutional behaviors or provide new routes to advantageous outcomes. 

In this paper, we demonstrate a series of counterintuitive results showing that certain cooperative-seeming formal institution designs turn out to defect against each other---or conversely, turn out to cooperate when seeming like they would defect---in settings where each institution's true operating polices are visible to the other institution.  We formalize each institution as an \emph{open-source agent}, i.e., a program whose \emph{source code} is readable by other programs.  Source code provides a written specification of how a program will behave; it's what a software developer writes to create a new application, and can be seen as a formal model of the processes by which an institution will operate.

A natural starting point for analyzing open-source agents is the setting of \emph{program equilibria} introduced by \citet{tennenholtz2004program}.  In this setting, given a game $G$ to be played by two agents (``programs'' in the terminology of \citeauthor{tennenholtz2004program}), we consider a higher-level game $G'$ being played by the designers of those agents.  In $G'$, each designer chooses an agent to play in the game $G$, and each agent reads the other agent's source code (as a string of text) before choosing its action (or `strategy') for $G$.  Mathematically, a similar formalism is employed by \citet{kalai2010commitment}, where again, each agent (`device response function' in their terminology) gets to read the other agent's source code (`commitment device') before choosing its action (`strategy').  We call this area \emph{open-source game theory}. 

While the frameworks of Tennenholtz and Kalai et al.~differ on when exactly the designers are allowed to introduce randomization, in both settings the agents are open-source. 
Tennenholtz mainly takes the perspective that the open-source condition will be necessary for modelling the economic interaction of real-world software systems via the Internet.  
This perspective could become increasingly important as artificial intelligence progresses, or if so-called ``decentralized autonomous organizations'' (DAOs) ever become economically prominent \citep{jentzsch2016decentralized,chohan2017decentralized,dupont2017experiments}.  
On the other hand,  \citeauthor{kalai2010commitment} view their formalism mainly as a means of representing commitments between human institutions, such as in price competitions or legal contracts. In our view, both potential applications of the open-source agent concept---to artificial agents and to human institutions---are important.  Another shortcoming of Tennenholtz' approach is that he invents a particular programming language in which to write agents, which is not immediately applicable to agents written in other programming languages or formalisms.

In this paper, we prove theorems about open-source agents \emph{that do not depend on which programming language they're written in}, using the formalism of mathematical proofs as applicable to arbitrary Turing machines.  We argue that the consequences of the open-source condition are, in short, more significant than previously made apparent in the game theory literature.  In particular, even the \emph{outcome} of a single game between fully specified open-source agents can be quite counterintuitive, as the theorems of this paper will demonstrate.  It is possible, therefore, that the overlap between computer science and game theory, with a helping of logic, could progress into what von Neumann, E. Kalai, and  Shoham would call ``the third stage'' of scientific development, where theory is able to make accurate predictions beyond what practitioners would intuitively expect \citep{shoham2008computer}.  
Intriguingly, such results may be \emph{necessary} for historically earlier concepts from game theory, such as equilibria, to be applicable in reality: without these results, the designers of artificial agents cannot acquire an adequate understanding of those agents to be able to foresee even a single game outcome, so it would make little sense to model those designers as being in equilibrium with one another.

For this reason, we call for the study of open-source games in their own right, irrespective of whether the designers of the agents in the games are in equilibrium.  Thus, the study of open-source game theory as construed here is both broader and more fundamental than that of program equilibria.  We begin this study by examining interactions between very simple agents who reason about each other using mathematical proofs and make decisions based on that reasoning (Section \ref{sec:setup}).  We show how results in mathematical logic can be applied to resolve the outcomes of such games (Section \ref{sec:results}) and conjecture that perhaps these results may be reducible to a form more easily applied to human institutions (Section \ref{sec:intuition}).

We anticipate that the methods of this paper will be unfamiliar to many readers in game theory and economics.  Facing this difficulty may be inevitable if we wish to understand the game theory of artificial agents or to learn from that theory in our modelling of human institutions.  To encourage readers to think about this 
 topic through fresh eyes, we include simple workable examples alongside theorems and open problems, with the hope of inspiring more interdisciplinary work spanning game theory, computer science, and logic.

\subsection{Related work}\label{sec:rw}
As discussed in the introduction, \citet{tennenholtz2004program} and \citet{kalai2010commitment} have previously examined the interaction of open-source agents and characterized the equilibria between the designers of such agents in the form of so-called ``folk theorems.''   \citeauthor{tennenholtz2004program} calls the open-source agents ``programs'' and \citeauthor{kalai2010commitment} call them ``device response functions.''  \citeauthor{tennenholtz2004program}, as we sometimes do, allows the agents to randomize at runtime, and does not examine settings where the designers can randomize.
By contrast, the equilibrium concept of \citeauthor{kalai2010commitment} assumes the agents are deterministic at runtime whereas the designers are allowed to randomize in choosing their agents.  
Kalai's equilibrium concept also allows the designers to choose whether their agents are open-source.  In this paper, we focus away from designer equilibria and toward understanding the details of what kinds of agents can be built and how they interact.

\citet{tennenholtz2004program} also demonstrates a fairly trivial means of achieving a mutually cooperative equilibrium between agent designers: by writing an agent that checks if its opponent is exactly equal to itself, and cooperates only in that case.  \citet{lavictoire2014program} remark that this cooperative criterion is too fragile for practical use and put forward the idea that formal logic, particularly L\"{o}b's theorem and more generally \godel-\lob provability logic, should play a role in agents thinking about each other before making decisions. 
For the case in which the agents are given infinite time and space in which to do their thinking, \citet{barasz2014robust} exhibit a finite-time algorithm for ascertaining the outcomes of games between these (necessarily physically impossible) infinitely resourced agents, called \emph{modal agents} because of their use of modal logic.   
Currently it remains unclear to the present authors whether existing results in logic are truly adequate to establish bounded versions of \emph{all} of the modal agents described by \citet{barasz2014robust} and \citet{lavictoire2014program}; it is, however, a highly promising line of investigation.

\citet{oesterheld2019robust} exhibits a mutual simulation approach to cooperation in an open-source setting and argues that this approach is more computationally efficient than formally verifying properties of the opponent's program using proofs, as we do in this paper.  However, as Section \ref{sec:efficiency} will elaborate, mutual program verification can be made more efficient than mutual simulation, by designing the verification strategy to prioritize hypotheses with the potential to collapse certain loops in the metacognition of the agents.

\citet{halpern2018game} and  \citet{capraro2019translucent} have shown how partial transparency, or ``translucency'', between players is more representative of real-world interactions between human institutions and creates more opportunity for pro-social outcomes.  Their results assume a state of common knowledge in which the agents are known to be rational and counterfactually rational, which may be predictably unrealistic for present-day systems and institutions.  However, the normative appeal of their results could be used to motivate the design of more translucent and counterfactually rational entities.

Loosely speaking, \emph{revision games} \citep{kamada2020revision} and \emph{mechanism games} \citep{yamashita2010mechanism,peters2013folk,peters2014competing} are settings in which some players make plans or commitments visible to others.  Since plans and sometimes commitments can be written as programs, results from studying interactions between open-source agents might reveal interesting new strategies for revision games and mechanism games.

\section{Setup}\label{sec:setup}

In this paper, agents will make decisions in part by formally verifying certain properties of each other, i.e., by generating mathematical proofs about each other's source code.  We'll call such agents \emph{formal verifier agents}.  

For ease of exposition, pseudocode for the agents will be written in Python because it is a widely known programming language.  However, we emphasize that unlike \citet{tennenholtz2004program}, we \textit{are not} inventing a programming language for representing agents.  Rather, we analyze agents (and describe agents that analyze each other) using a proof-based approach, because proofs can be written that are independent of the particular programming language used to create the agents.  The agents could even be written in different programming languages altogether, as long as they are Turing machines.  Still, empirical or exploratory research in this area will probably be more efficient if the agents are written in a programming language more specifically designed for formal verifiability, such as HOL/ML \citep{nipkow2002isabelle,klein2009sel4} or Coq \citep{barras1997coq,chlipala2013certified}, but we emphasize that our results do not depend on this.

Below is a simple agent, called \Verb{CB} for ``CooperateBot'', who simply ignores its opponent's source code \Verb{opp_source} and returns \VC for ``cooperate''.  ``DefectBot'' is the opposite, returning \VD for ``defect''.  For any agent $A$, $A$\Verb{.source} will carry a valid copy of the agent's source code, like this:


\begin{minted}{python}
# "CooperateBot":
def CB(opp_source):
    return C
CB.source = """
def CB(opp_source):
    return C
"""
\end{minted}

\begin{minted}{python}
# "DefectBot":
def DB(opp_source):
    return D
DB.source = """
def DB(opp_source):
    return D
"""
\end{minted}


To save space when defining subsequent agents, we'll avoid writing out the full \Verb{agent.source} definition in this document, since it's always just a copy of the lines directly above it.

The outcome of a game is defined by providing each agent's source code as input to the other:
\begin{minted}{python}
# Game outcomes:
def outcome(agent1, agent2):
    return (agent1(agent2.source),agent2(agent1.source))
\end{minted}

To get warmed up to thinking about these agents, and to check that our definitions are being conveyed as intended, we ask the reader to work through each of the short \examples posed throughout this paper.  Later \examples will turn out to be open problems, but we'll start with easier ones:

\vspace{1ex}
\begin{exm} \easy What is \Verb{outcome(CB,DB)}? \answer{\Verb{CB(DB)} returns \VC because \Verb{CB} always returns \VC by definition; similarly  \Verb{DB(CB)} returns \VD. Hence the outcome is the pair \Verb{(C,D)}.}
\end{exm}

\subsection{Proof-searching via proof-checking}\label{ssec:proofsearch}

In this paper, our goal is to look at agents who perform some kind of check on the opponent before deciding whether to cooperate or defect.  For this, let's suppose the agents are equipped with the ability to read and write formal proofs about each other's source code using a formal proof language, such as Peano Arithmetic or an extension thereof.  Peano Arithmetic and its extensions are useful because they allow proofs about programs \emph{written in arbitrary programming languages}, by representing arbitrary computable functions in a mathematical form \citep{cori2001mathematical}.  In this way, we move past what might have seemed like a limitation of previous works on program equilibria, which invented and employed domain-specific programming languages for representing agents \citep{tennenholtz2004program, lavictoire2014program}.\footnote{The perception that program equilibrium results require agents to be written in domain-specific languages work was remarked by several earlier reviewers of the present draft.  The issue of whether it's ``obvious'' that the programming language requirements can be broadly generalized seems to depend starkly on the reader.  In any case, in light of the programming-language-independent results presented here and by \citet{critch2019parametric}, earlier works \citep{tennenholtz2004program, lavictoire2014program} can be argued to be more generalizable than they may have seemed at the time.}  Specifically, we will assume the agents can invoke a function called \Verb{proof_checker} which, given a source code for an opponent agent (encoded as a string), can check whether a given argument (encoded as a string, referring to the opponent as ``opp'') is a valid proof of a given proposition (also encoded as a string) about the opponent:
\begin{minted}{python}
def proof_checker(opp_source, proof_string, hypothesis_string):
 # This is the only function in this paper that won't be
 # be written out explicitly.  Our assumption is that it
 # checks if proof_string encodes a logically valid
 # proof that hypothesis_string is a true statement about
 # the opponent defined by opp_source.  The hypothesis_string  
 # can refer to the opponent as a function (not just its 
 # source) as 'opp'.  If the hypothesis_string is a valid 
 # proof about that function, this function returns True; 
 # otherwise it returns False. For detailed assumptions on 
 # the proof system, see (Critch, 2019).
  ...
\end{minted}

Such proof-checking functionalities are available in programming languages such as HOL/ML \citep{nipkow2002isabelle,klein2009sel4} and Coq \citep{barras1997coq,chlipala2013certified} that are designed to enable formal verifiability, but in principle formal verifiers can be built for any programming language, including probabilistic programming languages.  HOL in particular has been used to formally verify proofs involving metamathematics and modal logic \citep{paulson2015mechanised,benzmuller2016inconsistency}, which might make it particularly suited to the kinds of proofs in this paper.  Some basic efficiency assumptions on the proof system are laid out in \ref{app:proofsystem}.

When trying to prove or disprove a given hypothesis about an opponent, how will our agents search through potential proofs?  For simplicity, we assume a trivial search method: the agents will search through finite strings in alphabetical order, and check each string to see if it is a proof.  Real-world formal verifiers use heuristics to more efficiently guide their search for proofs or disproofs of desired properties; we expect our results will apply in those settings as well.   For concreteness, and to eliminate any ambiguity as to what is meant in our examples and results, we exhibit the code for proof searching below: 

\begin{minted}{python}
# string_generator iterates lexicographically through all 
# strings up to a given length; see appendix:
from appendix import string_generator

# proof_search checks each generated proof string to see
# if it encodes a valid proof of a given hypothesis:
def proof_search(length_bound, opp_source, hypothesis):
  for proof in string_generator(length_bound):
    if proof_checker(opp_source, proof, hypothesis):
      return True
  # otherwise, if no valid proof of hypothesis is found:
  return False
\end{minted}

\subsection{Basic formal verifier agents}
A \Verb{proof_search} function like the one above can be used to design more interesting agents, who try for a while to formally verify properties about each other before deciding to cooperate or defect.

The following class of agents are called \Verb{CUPOD} for ``Cooperate Unless Proof of Defection'', defined by a function \Verb{CUPOD} that constructs an agent \Verb{CUPOD(k)} for each $k\ge 0$.  Specifically, \Verb{CUPOD(k)} is a particular agent that behaves as follows: \Verb{CUPOD(k)(opp_source)} searches for a proof, in $k$ characters or less, that the opponent \Verb{opp} is going to defect against \Verb{CUPOD(k)}.  If a proof of defection is found, \Verb{CUPOD(k)} defects; otherwise it ``gives the benefit of the doubt'' to its opponent and cooperates:

\newcommand{\cupod}{\Verb{CUPOD(k)}\xspace}
\begin{minted}{python}
# "Cooperate Unless Proof Of Defection" (CUPOD):
def CUPOD(k):
  def CUPOD_k(opp_source):
    if proof_search(k, opp_source, "opp(CUPOD_k.source) == D"):
      return D
    else:
      return C
  CUPOD_k.source = ... # the last 5 lines with k filled in
  return CUPOD_k
\end{minted}

\newcommand{\dupoc}{\Verb{DUPOC(k)}\xspace}
Next we have \Verb{DUPOC}, for ``Defect Unless Proof of Cooperation", the mirror image of \Verb{CUPOD}.  \Verb{DUPOC(k)} will defect unless it finds affirmative proof, in \Verb{k} characters or less, that its opponent is going to cooperate:

\begin{minted}{python}
# "Defect Unless Proof Of Cooperation" (DUPOC):
def DUPOC(k):
  def DUPOC_k(opp_source):
    if proof_search(k, opp_source, "opp(DUPOC_k.source) == C"):
      return C
    else:
      return D
  DUPOC_k.source = ... # the last 5 lines with k filled in
  return DUPOC_k
\end{minted}

\section{Results, Part I}\label{sec:results}
Our results are easiest to explain in the context of examples, and some of our open problems are easiest to explain in the context of results, so we deliver them all together.

\subsection{Basic properties of CUPOD and DUPOC}
Let us first examine some basic properties and examples of interactions with formal verifier agents, as a foundation for understanding the more difficult results.  

\begin{exm} \medium What is \Verb{outcome(CUPOD(10), DB)}? 
\answer{\Verb{CUPOD(10)(DB.source)} searches for a proof, using 10 characters or less, that \Verb{DB(CUPOD(10).source) == D}.  While it is true that \Verb{DB(CUPOD(10).source) == D}, the proof of this fact will take more than 10 characters of text to write down, so \Verb{CUPOD(10)} won't find the proof and will fall back on its default action. Hence, the outcome is \VCD.}
\end{exm}

\begin{exm} \medium What is \Verb{outcome(CUPOD(10^9), DB)}? Here \Verb{10^9} denotes $10^9$.
\answer{The fact that \Verb{DB(CUPOD(10^9).source) == D} is relatively easy to prove, via writing down a proof that \Verb{DB} always returns D; a hardworking undergrad could probably write out a fully rigorous proof by hand.  The length of the proof depends a bit on the specifics of the proof language we use to write down the proof, but for any reasonable proof language the proof shouldn't take more than a billion characters (around 200 thousand pages!).  So, \Verb{CUPOD(10^9)(DB.source)} will find the proof and return D, yielding the outcome \VDD.}
\end{exm}

These examples show the role of the length bound $k$.  Next, let us observe the following interesting property of \cupod: that it never defects on an opponent unless that opponent ``deserves it'' it in a certain sense:
\begin{proposition}\label{prop:cupod:doesntexploit} \cupod never exploits its opponent.  That is, for all opponents \Verb{opp} and all $k$,  \Verb{outcome(CUPOD(k),opp)} is never \Verb{(D,C)}.
\end{proposition}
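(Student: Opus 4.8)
The plan is to prove the statement by a short case analysis on the action taken by \cupod, reducing everything to soundness of the proof system. Unpacking the definition, $\Verb{outcome(CUPOD(k),opp)} = \Verb{(CUPOD(k)(opp.source), opp(CUPOD(k).source))}$, so it suffices to show that whenever \Verb{CUPOD(k)(opp.source)} evaluates to \VD, we also have $\Verb{opp(CUPOD(k).source)} = \VD$ (hence $\neq \VC$); and if instead \Verb{CUPOD(k)(opp.source)} evaluates to \VC, the first coordinate already fails to be \VD. Either way the pair \VDC is excluded.

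First I would trace the definition of \Verb{CUPOD(k)}: the sole branch on which \Verb{CUPOD_k} returns \VD is the one guarded by \Verb{proof_search(k, opp_source, "opp(CUPOD_k.source) == D")} returning \Verb{True}. By the definition of \Verb{proof_search}, this occurs precisely when there is a string \Verb{proof} of length at most $k$ such that \Verb{proof_checker(opp_source, proof, "opp(CUPOD_k.source) == D")} returns \Verb{True}. So the assumption \Verb{CUPOD(k)(opp.source)} $= \VD$ yields such a certified proof string.

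The key step is then to invoke soundness of \Verb{proof_checker} and of the underlying proof system (the assumptions recorded in \ref{app:proofsystem}): if \Verb{proof_checker} certifies \Verb{proof} as a valid proof of the statement \Verb{"opp(CUPOD_k.source) == D"} about the function \Verb{opp} defined by \Verb{opp_source}, then that statement is true. Since \Verb{CUPOD_k.source} is by construction exactly the source string \Verb{CUPOD(k).source}, the true statement is precisely that the computation \Verb{opp(CUPOD(k).source)} halts and returns \VD. In particular \Verb{opp(CUPOD(k).source)} $\neq \VC$, so the second coordinate of \Verb{outcome(CUPOD(k),opp)} is not \VC, and the outcome is not \VDC. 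Combined with the trivial case above, this proves the proposition for all \Verb{opp} and all $k$.

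I expect there is no genuine obstacle here, only one bookkeeping point worth stating carefully: the occurrence of \Verb{CUPOD_k.source} inside the hypothesis string must be understood as denoting the concrete source code of the very agent \cupod that is doing the reasoning, so that ``the statement the proof checker certifies'' and ``the event observed in \Verb{outcome}'' are literally the same computation \Verb{opp(CUPOD(k).source)}. Once that identification is made explicit, the argument uses nothing about \Verb{opp} beyond its being a Turing machine (which is why the claim is uniform over all opponents), and nothing about the proof system beyond soundness (in particular, completeness and the length bound $k$ play no role in this direction).
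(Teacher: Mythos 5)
Your proof is correct and takes essentially the same route as the paper's: the only branch on which \texttt{CUPOD(k)} returns \texttt{D} is the one where its proof search certifies a proof that the opponent defects against it, and soundness of \texttt{proof\_checker} then guarantees that \texttt{opp(CUPOD(k).source)} really is \texttt{D}, so the outcome cannot be \texttt{(D,C)}. The extra bookkeeping you spell out (unpacking \texttt{proof\_search} and identifying the source string in the hypothesis with \texttt{CUPOD(k)}'s own source) is left implicit in the paper's short argument but adds nothing beyond it.
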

\begin{proof} 
If \Verb{CUPOD(k)(opp.source)==D}, it must be because \Verb{CUPOD(k)} managed to prove that \Verb{opp(CUPOD(k).source)==D}.  Under the assumption that the \Verb{proof_check} function employed by \Verb{CUPOD(k)} is sound, it must be that \Verb{opp(CUPOD(k).source)==D}, so the outcome cannot be \VDC.
\end{proof}

Dually, we have the following:

\begin{proposition}\label{prop:dupoc:unexploitable} \dupoc is never exploited by its opponent.  That is, for all opponents \Verb{opp} and all $k$, \Verb{outcome(DUPOC(k),opp)} is never \Verb{(C,D)}.
\end{proposition}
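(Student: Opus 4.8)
The plan is to argue by contradiction, mirroring the proof of Proposition~\ref{prop:cupod:doesntexploit} with the roles of \VC and \VD swapped. Suppose, for some opponent \Verb{opp} and some $k$, that \Verb{outcome(DUPOC(k),opp)} $=$ \Verb{(C,D)}. By the definition of \Verb{outcome}, this means simultaneously \Verb{DUPOC(k)(opp.source)==C} and \Verb{opp(DUPOC(k).source)==D}. I will derive a contradiction from the first of these equalities together with soundness of \Verb{proof_checker}.

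First I would unwind the definition of \Verb{DUPOC(k)}: its default return value is \VD, and it returns \VC only on the branch where \Verb{proof_search(k, opp.source, "opp(DUPOC_k.source) == C")} evaluates to \Verb{True}. Hence \Verb{DUPOC(k)(opp.source)==C} forces \Verb{proof\_search} to have located some string \Verb{proof} of length at most $k$ with \Verb{proof_checker(opp.source, proof, "opp(DUPOC_k.source) == C")} returning \Verb{True}. Note here that the hypothesis string passed in refers to the opponent as \Verb{opp} and to \Verb{DUPOC(k)} via \Verb{DUPOC_k.source}, which by construction is a valid copy of \Verb{DUPOC(k)}'s own source, so the verified statement really is a statement about the very pair of agents in this game.

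Next, invoking the soundness assumption on the proof system (the same assumption used in Proposition~\ref{prop:cupod:doesntexploit}), the existence of such a valid proof implies the proposition it certifies is true, i.e.\ \Verb{opp(DUPOC(k).source)==C}. This contradicts \Verb{opp(DUPOC(k).source)==D} (since \VC and \VD are distinct actions), completing the argument. I do not expect any real obstacle: the only things to be careful about are (i) that the \VC-branch of \Verb{DUPOC(k)} is genuinely the only way it outputs \VC, so that a \VC output witnesses a found proof, and (ii) that the self-reference \Verb{DUPOC_k.source} inside the hypothesis string is the correct source, both of which are immediate from the definitions in Section~\ref{sec:setup}. One could also remark that this is exactly the dual statement to Proposition~\ref{prop:cupod:doesntexploit} under the symmetry interchanging ``cooperate'' and ``defect,'' so no new idea is required.
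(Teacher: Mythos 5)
Your argument is correct and is essentially the paper's own proof: the paper simply says the proof is that of Proposition~\ref{prop:cupod:doesntexploit} with \VC and \VD switched, which is exactly the soundness-of-\Verb{proof_checker} argument you spell out (your contradiction framing is an immaterial repackaging of the same step).
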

\begin{proof}  The proof is the same as for Proposition \ref{prop:cupod:doesntexploit}, with \VC and \VD switched.
\end{proof}

\subsection{CUPOD vs CUPOD: What happens?}
Now let us examine our first example interaction between two open-source agents who both who both employ formal verifiers:

\vspace{1ex}
\begin{exm}\label{ex:cupod} \difficult What is \Verb{outcome(CUPOD(k), CUPOD(k))} when we set $k=10^{12}$?  
\end{exm}

We respectfully urge the reader to work through the earlier \examples before attending seriously to \Example \ref{ex:cupod}, to understand how the proof length bound can affect the answer.  These nuances represent a key feature of bounded rationality: when an agent thinks for a while about a hypothesis and reaches no conclusion about it, it still takes an action, if only a ``null'' or ``further delay'' action.

The next step is to notice what exactly \Verb{CUPOD(k)(CUPOD(k).source)} is seeking to prove and what it will do based on that proof.  Specifically, \Verb{CUPOD(k)(CUPOD(k).source)} is searching for a proof in $k$ characters or less that  \Verb{CUPOD(k)(CUPOD(k).source) == D}, and if it finds such a proof, it will return \VD.  Thus, we face a kind of circular dependency: the only way \Verb{CUPOD(k)(CUPOD(k).source)} will return \VD is if it \emph{first} finds a proof in fewer than $k$ characters that \Verb{CUPOD(k)(CUPOD(k).source)} will return \VD!

\textbf{Common approaches to resolving circularity.} How does this circular thinking in the \cupod vs \cupod interaction resolve?  Questions of this sort are key to any situation where two agents reasoning about each other interact, as each agent is analyzing the other agent analyzing itself analyzing the other agent [...].  Below are some common approaches to resolving how this circularity plays out for the \cupod{}s.

Approach 1 is to look for some kind of reduction to smaller values of $k$, as one might do when studying an iterated game.  But this game is not iterated; it's a one-shot interaction.  \Verb{CUPOD(k)(CUPOD(k).source)} will base its cooperation or defection on whether it proves \Verb{CUPOD(k)(CUPOD(k).source) == D}, not on the behavior of any \Verb{CUPOD(j)} for any smaller $j<k$.  This inability to reduce to smaller $k$ values might suggest the problem is in some sense intractable.  However, from the outside looking in, if the problem is so intractable that no short proof can answer it, then the result will have to be \VC,  because that's what happens when no shorter-than-k proof of \VD can be found.

Approach 2 is to wonder if both C and D are possible answers.  One can imagine the answer being \VC, so the algorithm finds a proof of D and therefore returns \VC, or the answer being D, so algorithm finds no proof of D and therefore returns \VC.  However, the programs involved here are deterministic, and have a bounded runtime! They must halt and return something.  So what do they return, \VC or \VD?  Again, if one cannot write down a shorter-than-$k$ proof of either answer, the answer must be \VC.

Approach 3 is to envision the circularity unrolling into a kind of stack overflow:  for defection to occur, \Verb{CUPOD(k)(CUPOD(k).source)} must prove that \Verb{CUPOD(k)(CUPOD(k).source)} proves that \Verb{CUPOD(k)(CUPOD(k).source)} proves that\ldots.  Such a proof would seem to be infinite in length, and thus would not fit within any finite length bound $k$.  This way of thinking gives an intuitive reason for why the answer should be \VC: the proof search for defection will find nothing, hit the length bound $k$, and then \Verb{CUPOD(k)(CUPOD(k).source)} will return \VC:

\emph{No-proof conjecture}: \Verb{CUPOD(k)(CUPOD(k).source) == C}, because there exists no finite-length-proof that \Verb{CUPOD(k)(CUPOD(k).source) == D}.

\subsection{Self-play results}

The first surprise of this paper is that the \emph{no-proof conjecture} directly above is false.  Because of a computationally bounded version of a theorem in logic called L\"{o}b's theorem \citep{critch2019parametric}, a fairly short proof that \Verb{CUPOD(k)(CUPOD(k).} \Verb{source) == D} in fact does exist, and as a result \Verb{CUPOD(k)(CUPOD(k).source)} finds the proof and defects because of it.   In other words, each of Approaches 1-3 for resolving the circular dependency in the previous section is incorrect:

\begin{theorem}\label{thm:cupod} For $k$ large, \Verb{outcome(CUPOD(k),CUPOD(k)) == (D,D)}.
\end{theorem}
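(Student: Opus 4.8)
The plan is to derive Theorem~\ref{thm:cupod} from the computationally bounded, parametric form of L\"ob's theorem established in \citet{critch2019parametric}, applied to the self-referential statement ``\cupod defects on itself''. Fix $k$; write $A$ for the agent \cupod and let $s$ denote its source code. Let $P$ abbreviate the arithmetic sentence \Verb{opp(CUPOD_k.source) == D}, in which \Verb{opp} denotes the function whose source code is $s$, so that $P$ asserts exactly that $A(s)$ returns \VD. Write $\Box_k P$ for the statement ``\Verb{proof_search} locates a proof of $P$ of length at most $k$''---the bounded provability predicate built from the agents' own \Verb{proof_checker}. By inspection of the code of $A$, it is immediate that $A(s)$ returns \VD precisely when $\Box_k P$ holds, and returns \VC otherwise. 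Hence it suffices to show that $\Box_k P$ is \emph{true} for all sufficiently large $k$: then \Verb{proof_search} finds such a proof, $A(s)$ returns \VD, and since both players carry the same source $s$ the outcome is \VDD. (This is internally consistent: by soundness of \Verb{proof_checker}, the proof found really does witness the true fact that $A(s)$ returns \VD.)

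The first step is the reflection premise: I would show
\[
\vdash_{\ell}\ \bigl(\Box_k P \to P\bigr),
\]
where ``$\vdash_{\ell}$'' means ``provable by a proof of length at most $\ell$'' and $\ell$ grows at most polynomially in $\log k$. This implication is a pure unwinding of the fixed, explicitly written code of $A$ evaluated on the fixed input $s$: by inspection, $A(s)$ tests $\Box_k P$ and returns \VD exactly when that test succeeds, so ``$\Box_k P \to P$'' admits a short, routine formalized-simulation proof. Its length is a constant plus the size of the description of $A$, and the only place $k$ enters is as a numeral appearing inside $s$, which yields the $\mathrm{poly}(\log k)$ bound; note we never have to \emph{execute} \Verb{proof_search} here, only to describe its branch structure.

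The second step is to run bounded L\"ob. By the parametric bounded L\"ob theorem of \citet{critch2019parametric}, a length-$\ell$ proof of $\Box_k P \to P$ yields a proof of $P$ itself of length at most $g(k,\ell)$ for an explicit function $g$ that grows strictly more slowly in $k$ than the identity (the precise bound being whatever \citet{critch2019parametric} provides). Hence there is a threshold $k_0$ such that $g(k,\ell) \le k$ for all $k \ge k_0$; for every such $k$ a proof of $P$ of length at most $k$ exists, i.e.\ $\Box_k P$ is true. By the reduction of the first paragraph, $A(s)$ then returns \VD, so \Verb{outcome(CUPOD(k),CUPOD(k))} equals \VDD for all $k \ge k_0$, which is the assertion of the theorem.

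The main obstacle is the quantitative bookkeeping where the two steps meet. One must confirm (i) that the internal proof of $\Box_k P \to P$ is genuinely short---in particular that the self-referential packaging of $s$, which embeds the numeral $k$ together with a self-pointer to \Verb{CUPOD_k.source}, does not inflate its length beyond $\mathrm{poly}(\log k)$---and (ii) that the output length $g(k,\ell)$ produced by bounded L\"ob is eventually dominated by the agent's own proof budget $k$, which is precisely what the ``for $k$ large'' hypothesis secures. A secondary technical point, covered by the efficiency assumptions on the proof system in \ref{app:proofsystem} and in \citet{critch2019parametric}, is to verify that \Verb{proof_search} with bound $k$ genuinely defines a predicate $\Box_k$ satisfying the bounded analogues of the derivability conditions that L\"ob's argument requires.
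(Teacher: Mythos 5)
Your argument is correct and takes essentially the same route as the paper: both reduce the theorem to the parametric bounded L\"ob theorem of \citet{critch2019parametric} (Lemma~\ref{lem:pblt}) applied to the self-fulfilling statement that \texttt{CUPOD(k)} defects on itself, the reflection step $\Box_k P \to P$ being provably cheap because it is just an unwinding of the agent's own code. The only difference is presentational: you phrase bounded L\"ob as a per-$k$ quantitative bound $g(k,\ell)$ on the length of the resulting proof of $P$ and then argue the proof search succeeds, whereas the paper invokes the parametric form directly (substituting $p[k]$ for the defection statement, $f(k)=k$, $k_1=0$) and concludes via soundness of $S$; since your first step is uniform in $k$, the two formulations are interchangeable here.
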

\vspace{1ex}

\noindent The proof of this theorem, and several others throughout this paper, will make use of the lemma below, which depends on the following notation:

\begin{notation}[$S$, $\vdash$, $\Box$, and $\succ$]{\ }
\begin{itemize}
    \item $S$ stands for the formal proof system being used by the agents; see \ref{app:proofsystem} for more details about it.
    \item $\proves{} xyz...$ means ``the statement `xyz...' can be proved using the proof system $S$.
    \item $\bxof{k}{xyz...}$ means ``a natural number exists which, taken as a string, encodes a proof of $xyz...$, and that proof when written in the proof language of $S$ requires at most $k$ characters of text.''
    \item $f(k) \succ \Oo(\lg{k})$ means there is some positive constant $c>0$ and some threshold $\hat k$ such that for all $k>\hat k$, $f(k)>c\lg{k}$.
\end{itemize}
\end{notation}

\newcommand{\pbl}{\hyperref[lem:pblt]{PBLT}\xspace}
\begin{lemma}[\textbf{PBLT: Parametric Bounded L\"{o}b Theorem}]\label{lem:pblt}
Let $p[k]$ be a formula with a single unquantified variable $k$ in the proof language of the a proof system $S$, satisfying the conditions in \ref{app:proofsystem}.  Suppose that $k_1\in\NN$ and $f:\NN \to \NN$ is an increasing computable function satisfying $f(k) \succ \Oo(\lg{k})$, and $S$ can verify that the formula $p[k]$ is ``potentially self-fulfilling for large $k$'' in the sense that 
\[\proves{} \Forall{k>k_1}{\bxof{f(k)}{p[k]} \to p[k]}.\\
\]
Then there is some threshold $k_2 \in \NN$ such that
\[
\proves{} \Forall{k>k_2}{p[k]}.
\]
\end{lemma}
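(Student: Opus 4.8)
The plan is to carry out \godel's proof of \lob's theorem inside $S$, but schematically in the free variable $k$ and with every application of a derivability condition replaced by its length-bounded analogue from \ref{app:proofsystem}, all the while tracking how the lengths of the proofs involved depend on $k$. The first move is the diagonal lemma: manufacture a parametric \lob sentence, i.e.\ a formula $\psi[k]$ with $k$ as its only free variable such that
\[
\proves{} \Forall{k}{\psi[k] \leftrightarrow \left(\bxof{g(k)}{\psi[k]} \to p[k]\right)},
\]
where $g:\NN\to\NN$ is a suitably ``discounted'' version of $f$ to be pinned down in the next step. Since $k$ is a variable and $g$, $p$ have short descriptions, $\psi[k]$ has size $\Oo(\lg k)$, and the biconditional above has a single fixed-length proof.

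From the fixed point one gets $\proves{} \Forall{k}{\psi[k] \to \left(\bxof{g(k)}{\psi[k]} \to p[k]\right)}$; then, running the standard \lob chain inside $S$ and uniformly in $k$ --- using bounded necessitation (a proof of $\phi$ of length $\ell$ yields $\bxof{\ell+\Oo(\lg k)}{\phi}$, $S$-verifiably), bounded distribution ($\bxof{a}{\phi\to\chi}$ and $\bxof{b}{\phi}$ together yield $\bxof{h(a,b)}{\chi}$), bounded inner necessitation ($\bxof{a}{\phi}\to\bxof{d(a)}{\bxof{a}{\phi}}$), and monotonicity of $\bx{\cdot}$ in its subscript --- one arrives at
\[
\proves{} \Forall{k>k_1'}{\bxof{g(k)}{\psi[k]} \to \bxof{F(g(k))}{p[k]}},
\]
where $F$ is the composition of the finitely many overhead functions accrued along the chain. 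Now choose $g$ so that $F(g(k))\le f(k)$ for all large $k$; because $f$ is increasing with $f(k)\succ\Oo(\lg k)$ and $F$ is assembled from the controlled overheads of \ref{app:proofsystem}, such a $g$ exists and may itself be taken with $g(k)\succ\Oo(\lg k)$. Monotonicity then sharpens the display to $\proves{}\Forall{k>k_1'}{\bxof{g(k)}{\psi[k]}\to\bxof{f(k)}{p[k]}}$, and composing with the hypothesis $\proves{}\Forall{k>k_1}{\bxof{f(k)}{p[k]}\to p[k]}$ produces $\proves{}\Forall{k>k_1''}{\bxof{g(k)}{\psi[k]}\to p[k]}$, with $k_1'':=\max(k_1,k_1')$.

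By the fixed point, this last statement is just $\proves{}\Forall{k>k_1''}{\psi[k]}$; fix one such proof, of length $L$. Instantiating it at a particular value of $k$ costs only $\Oo(\lg k)$ additional characters, so for every $k>k_1''$ there is a proof of $\psi[k]$ of length at most $L+\Oo(\lg k)$, which is at most $g(k)$ once $k$ is large (this is where $g(k)\succ\Oo(\lg k)$ is needed), and $S$ can verify this length bound uniformly in $k$. Hence $\proves{}\Forall{k>k_2}{\bxof{g(k)}{\psi[k]}}$ for a suitable $k_2\ge k_1''$, and modus ponens with the implication from the previous paragraph yields $\proves{}\Forall{k>k_2}{p[k]}$, which is the conclusion of \pbl.

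\textbf{Where the difficulty lies.} Everything rests on the quantitative bookkeeping behind the choice of $g$, together with the tension it creates at the last step: $g$ must be small enough that the overhead-inflated bound $F(g(k))$ never exceeds the hypothesis bound $f(k)$, yet large enough that the (fixed-length, then instantiated) proof of $\psi[k]$ still fits inside $g(k)$ characters, so that $\bxof{g(k)}{\psi[k]}$ is both true and $S$-provably so. Verifying that a single $g$ can meet both constraints for all sufficiently large $k$ is the delicate core of the argument; it is exactly here that one uses that $f$ is increasing and dominates a positive multiple of $\lg k$, and the explicit efficiency assumptions on $S$ in \ref{app:proofsystem} --- absent a quantitative grip on the overhead functions $h$, $d$, and hence on $F$, they could grow too fast to be absorbed, and the loop would not close.
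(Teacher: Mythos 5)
Your proposal and the paper part ways at the very first step: the paper does not prove \pbl from first principles at all. Its entire proof is a reduction to an already-published result, \citet[Theorem 4.2]{critch2019parametric}, observing that under the assumptions of \ref{app:proofsystem} the proof expansion function can be taken linear, $e(k)=e^*\cdot k$, which is what makes the hypothesis $f(k)\succ\Oo(\lg k)$ sufficient to invoke that theorem. What you have written instead is a sketch of how one would prove the cited theorem itself: a parametric \lob sentence $\psi[k]$ via the diagonal lemma, the \godel--\lob chain run uniformly in $k$ with length-bounded derivability conditions, and a ``discount'' function $g$ chosen so that the overhead-inflated bound stays below $f(k)$ while the instantiated proof of $\psi[k]$ (of length $L+\Oo(\lg k)$) still fits inside $g(k)$. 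That is indeed the right architecture, and it matches the structure of the argument in the cited source, so as an outline of the underlying mathematics it is on target.

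However, as a proof it is incomplete precisely where you say the difficulty lies. The existence of a single $g$ satisfying both constraints, the $S$-verifiable bounded necessitation step (that a proof of length $\ell$ yields a \emph{provably} bounded $\bxof{\ell'}{\,\cdot\,}$ with $\ell'$ controlled by the proof expansion constant $e^*$), the quantitative form of bounded distribution and inner necessitation (your $h$, $d$, and hence $F$), and the uniform-in-$k$ provability of the length bound on the instantiated proofs are exactly the technical content that the paper outsources to \citet{critch2019parametric}; you flag them but do not establish them, and without explicit control of $F$ the loop genuinely need not close. So the honest summary is: you chose a self-contained route where the paper chose citation, your route is the correct one for proving the cited theorem, but what you have is a proof plan whose delicate core remains to be carried out (or, equivalently, to be discharged by pointing at the same external theorem the paper uses).
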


\begin{proof}[Proof of Lemma \ref{lem:pblt}]  This is a special case of \citet[Theorem 4.2]{critch2019parametric} where the proof expansion function $e(k)$ is in $\Oo(k)$.  Specifically, in this paper we have assumed (see \ref{app:proofsystem}) that a proof can be expanded and checked in time linear in the length of the proof, i.e., $e(k)=e^*\cdot k$ for some constant $e^*$.  This assumption makes the condition $f(k) \succ \Oo(\lg{k})$ sufficient to apply \citet[Theorem 4.2]{critch2019parametric}, as described by \citet[Section 4.2]{critch2019parametric}.
\end{proof}

\begin{proof}[Proof of Theorem \ref{thm:cupod}] This follows directly from \pbl, via the substitutions \\
$p[k]$=\Verb{(CUPOD(k)(CUPOD(k).source) == D)}, $f(k) = k$, and $k_1=0$.
\end{proof}

On the flip side of Theorem \ref{thm:cupod}, we also have the following symmetric result:

\begin{theorem}\label{thm:dupoc} For $k$ large, \Verb{outcome(DUPOC(k),DUPOC(k)) == (C,C)}.
\end{theorem}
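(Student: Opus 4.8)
The plan is to run the exact mirror image of the argument for Theorem \ref{thm:cupod}, applying \pbl with the ``self-fulfilling'' formula now asserting that \Verb{DUPOC(k)} \emph{cooperates} against itself. Concretely, I would take $p[k]$=\Verb{(DUPOC(k)(DUPOC(k).source) == C)}, $f(k)=k$, and $k_1=0$. Since $f(k)=k\succ\Oo(\lg k)$ holds trivially and $f$ is increasing and computable, the hypotheses of \pbl on $f$ are satisfied, and the whole argument reduces to verifying the ``potentially self-fulfilling for large $k$'' condition, i.e., that $S$ proves $\Forall{k>0}{\bxof{k}{p[k]}\to p[k]}$.

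First I would establish that self-fulfilling implication by unwinding the definition of \Verb{DUPOC(k)}, reasoning inside $S$. Assume $\bxof{k}{p[k]}$: some natural number encodes, using at most $k$ characters, a valid proof that \Verb{DUPOC(k)(DUPOC(k).source) == C}. When \Verb{DUPOC(k)(DUPOC(k).source)} executes, its internal \Verb{proof_search} call iterates lexicographically over all strings of length at most $k$, so it encounters a valid proof of \Verb{opp(DUPOC_k.source) == C} and returns \Verb{True}; hence \Verb{DUPOC(k)} returns \VC, which is exactly $p[k]$. The only subtlety is that this chain must be \emph{formalizable} in $S$; it is, because it concerns only the execution of fixed programs (\Verb{proof_search}, \Verb{proof_checker}, and \Verb{DUPOC}) on fixed inputs, which is precisely what the representability and soundness assumptions collected in \ref{app:proofsystem} and \citet{critch2019parametric} are designed to support. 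This is the same fact used in the proof of Theorem \ref{thm:cupod} with \VC and \VD interchanged; note in particular that \Verb{DUPOC(k)} searches up to $k$ characters and we chose $f(k)=k$, so the implication loses no ``slack''.

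Granting the self-fulfilling condition, \pbl supplies a threshold $k_2$ with $\proves{}\Forall{k>k_2}{p[k]}$. By soundness of $S$, for every $k>k_2$ the statement $p[k]$ is in fact true, i.e., \Verb{DUPOC(k)(DUPOC(k).source) == C}. Unwinding the definition of \Verb{outcome}, for such $k$ the value \Verb{outcome(DUPOC(k),DUPOC(k))} is the pair whose two coordinates are each \Verb{DUPOC(k)(DUPOC(k).source)}, hence it equals \VCC; this is the claim, with ``$k$ large'' meaning $k>k_2$.

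The only genuine content beyond bookkeeping is the self-fulfilling implication, so that is the step I expect to require care. But, as noted, it is precisely dual to the corresponding step in Theorem \ref{thm:cupod}, obtained by swapping \VC with \VD and ``proof of defection'' with ``proof of cooperation,'' so no new obstacle arises; the whole theorem is the mirror of Theorem \ref{thm:cupod} under that swap.
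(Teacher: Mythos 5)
Your proposal is correct and follows essentially the same route as the paper, which likewise proves the theorem by a direct application of \pbl with the substitutions $p[k]=$ \Verb{(DUPOC(k)(DUPOC(k).source) == C)}, $f(k)=k$, and $k_1=0$. The extra detail you supply (verifying inside $S$ that a short proof of cooperation triggers \Verb{proof_search} to succeed, and invoking soundness to pass from $\proves{}\Forall{k>k_2}{p[k]}$ to the actual outcome \VCC) is exactly the content the paper leaves implicit in its phrase ``follows directly from \pbl.''
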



\begin{proof} This follows directly from \pbl, via the substitutions \\
$p[k]=$\Verb{((DUPOC(k)(DUPOC(k).source) == C)}, $f(k) = k$, and $k_1=0$.
\end{proof}

This theorem may be equally surprising to Theorem \ref{thm:cupod}: most computer science graduate students guess that the answer is \Verb{D} when asked to predict the value of
\Verb{DUPOC(k)(DUPOC(k).source)}, even when given a few minutes to reflect and discuss with each other.\footnotemark{}  These guesses are almost always on the basis of one of the three misleading thinking approaches in the previous section, most commonly Approach 3, which expects the proof search to fail for stack-overflow-like reasons.

\footnotetext{Across several presentations to a total of around 100 computer science graduate students and faculty, when the audience is given around 5 minutes to discuss in small groups, around 95\% of attendees conjecture that the proof search would run out and the program would return \Verb{D}.}

\subsection{Intuition behind the proofs of Theorems \ref{thm:cupod} and \ref{thm:dupoc}}\label{sec:intuition}
Exactly what mechanism allows \dupoc to avoid these circularity and stack overflow problems?

A key feature of the proof of L\"{o}b's theorem, and \pbl which underlies Theorems \ref{thm:cupod} and \ref{thm:dupoc}, is the ability to a construct a statement that in some sense refers to itself.  This self-reference ability allows the proof to avoid the stack overflow problem one might otherwise expect.  In broad brushstrokes, the proof of \pbl follows a similar structure to the classical modal proof of L\"{o}b's theorem \citep{daryl2011modal}, which can be summarized in words for the case of Theorem \ref{thm:dupoc} as follows:
\begin{enumerate}
    \item We construct a sentence $\Psi$ that says ``If this sentence is verified, then the agents will cooperate,'' using a theorem in logic called the modal fixed point theorem.
    \item We show that if $\Psi$ can be verified by the agents, then mutual cooperation can also be verified by the agents, without using any facts about the agents' strategies other than their ability to find proofs of a certain length.
    \item We use the fact that verifying mutual cooperation causes the agents to cooperate, to show that $\Psi$ is true.
    \item Finally, we use the above proof of $\Psi$ to construct a formal verification of $\Psi$, which implies (by $\Psi$!) that cooperation occurs.
\end{enumerate}

This explanation, and the underlying mathematical proof, may be somewhat intuitively unsatisfying, because the meaning of the sentence $\Psi$ is somewhat abstract and difficult to relate directly to the agents.  Nonetheless, the result is true.  Moreover, we conjecture that a more illuminating proof may be possible:

\begin{definition} Let $S$ be the formal (proof) system used by \Verb{proof_checker}.  $\vdash X$ means ``the proof system $S$ can prove $X$'', and $\Box X$ means ``a natural number exists which encodes (via a G\"{o}del encoding) a proof of $X$ within the proof language and rules of $S$.''.
\end{definition} 

\begin{problem}\label{prob:lob}
L\"{o}b's theorem states that $\vdash(\Box C \to C)$ implies $\vdash(C)$.  We conjecture that L\"{o}b's Theorem can be proven without the use of the modal fixed point $\Psi \leftrightarrow (\Box\Psi \to C)$, by constructing an entire proof that refers to itself, according to the following intuitive template:
\begin{enumerate}
    \item  This proof is a proof of $C$.
    \item  Therefore $\Box C$.
    \item  By assumption, $\Box C \to C$.
    \item  Therefore, by (2) and (3), $C$.
\end{enumerate}
Is such a proof possible?  If not, why not?
\end{problem}

Such a proof would be suggestive of the following argument for one-shot cooperation between institutions, assuming each institution is DUPOC-like, in that it has already adopted internal policies, culture and personnel that will make it cooperate if it knows the other institution is going to cooperate:
\newpage

\begin{mdframed}
\vspace{1ex}
\noindent \underline{Cooperative affidavit for DUPOC-like institutions:}

\vspace{1ex}\noindent Institutions A and B have each recently undergone structural developments to prepare for cooperating with each other.  Moreover, representatives from each institution have thoroughly inspected the other institution's policies, culture, and personnel, and produced the attached inspection records with our findings, effectively rendering A and B ``open-source'' to one another.  These records show a readiness to cooperate from both institutions.  Moreover, the records are sufficient supporting evidence for the following argument:
\begin{enumerate}
    \item This signed document and the attached records constitute a self-evident (and self-fulfilling) prediction that Institutions A and B are going to cooperate.
    \item  Members of Institutions A and B can all read and understand this document and attached records, and can therefore tell that the other institution is going to cooperate.
    \item  Institution A's internal policies and culture are such that, upon concluding that Institution B is going to cooperate, Institution A will cooperate.  The same is true of Institution B's policies and culture with regards to Institution A.
    \item  Therefore, by (2) and (3), the Institutions A and B are going to cooperate.
\end{enumerate}
\end{mdframed}

Points (1)-(4) of this ``cooperative affidavit'' are manifestly a kind of circular argument.  However, the lesson of \pbl is that certain kinds of circular arguments about bounded reasoners can be logically valid.  In this case, most of the ``work'' toward achieving cooperation is carried out in each institution's adoption of DUPOC-like internal culture and policies (``we cooperate once we know they're going to cooperate''), and in the furnishing of mutual inspection records adequate to create common knowledge of that fact.  Once that much is done, the validation of the circular argument for cooperation becomes both logically feasible and sufficient to trigger cooperation.

It is also interesting to note that the above ``cooperative affidavit'' is not a contract; it is merely a signed statement of fact made valid by each company's DUPOC-like internal policies and culture.  As such, the document would add no mechanism of enforcement to ensure cooperation; it would simply trigger each institution's internal policies and culture to take effect in a cooperative manner, assuming those policies and culture are sufficiently DUPOC-like.  In Section \ref{sec:prophecies} we will discuss this prospect further in the context of existing literature spanning social psychology, economics, and education.

Finally, we remark that a similar phenomenon could result in CUPOD-like institutions turning out to defect against each other.  In other words, when two institutions are cooperating primarily on the basis of failing to understand each other, but stand ready to defect if they gain enough information about the counterparty to confidently predict defection, then a self-fulfilling prophecy of defection can arise as soon as the institutions believe they have become sufficiently ``open-source''.

Thus, CUPOD-like institutions could turn out to defect, and DUPOC-like institutions can turn out to cooperate, as a result of self-fulfilling prophecies analogous to Theorems \ref{thm:cupod} and \ref{thm:dupoc}.  This suggests that  institutions who are going to (intentionally or inadvertently) reveal their inner workings should consider first shifting from CUPOD-like policies and culture to DUPOC-like policies and culture.

\subsection{Further properties of \dupoc}
Another counterintuitive observation is that \cupod seems ``nicer'' than \dupoc, since by Propositions \ref{prop:cupod:doesntexploit} and \ref{prop:dupoc:unexploitable}, \cupod never exploits its opponent, while \dupoc is more ``defensive'' in that it never allows itself to be exploited.  This observation has given many readers the intuition that \cupod would have an easier time achieving mutual cooperation than \dupoc, but by Theorems \ref{thm:cupod} and \ref{thm:dupoc}, the opposite turns out to be true.  In fact, the agents \Verb{DUPOC(k)} for large values of $k$ --- henceforth ``DUPOCs'' --- have some very interesting and desirable game-theoretic properties:

\newcommand{\itemname}[1]{\item \textbf{(#1)} }
\begin{enumerate}
    \itemname{unexploitability} A DUPOC never cooperates with an opponent who is going to defect, i.e., it never receives the outcome \VCD in a game; only \VDD, \VDC, or \VCC.

    \itemname{broad cooperativity} \label{item:cc} DUPOCs achieve \VCC not just with other DUPOCs, but with a very broad class of agents, as long as those agents follow a certain basic principle of ``G-fairness'' described in \citet{critch2019parametric}.  Roughly speaking, if an opponent always cooperates with agents who exhibit legible (easy-to-verify) cooperation, then DUPOCs cooperate with that opponent.

    \itemname{rewarding legible cooperation} In order to achieve \VCC with a DUPOC, the opponent must not only cooperate, but \emph{legibly cooperate}, i.e., cooperate in a way that the DUPOC is able to verify with its bounded proof search.  So, if there are opponents around that are made of ``spaghetti code'' that the DUPOCs are unable to efficiently analyze, those opponents won't be rewarded.
    
    \itemname{rewarding \emph{rewarding} legible cooperation} Moreover, by item \ref{item:cc} above, the presence of DUPOCs in a population also rewards \emph{rewarding} legible cooperation, by cooperating with all agents who do it.

\end{enumerate}

Because of these points, one can imagine DUPOCs having a powerful effect on any population that contains them:

\vspace{1ex}
\begin{problem}\label{prob:modalpopulation} Determine conditions under which a population containing DUPOCs will evolve into a population where all agents reward legible cooperation, i.e., are \emph{G-fair} in the sense defined by \citet{critch2019parametric}.
\end{problem}

Note that a DUPOC does not cooperate with every agent who has some tendency to cooperate; in fact we have the following open problem:

\vspace{1ex}
\begin{problem}
\label{prob:dupoccupod} For large values of $k$, we conjecture that \\
\Verb{outcome(DUPOC(k),CUPOD(k))==(D,C)}.  Is this the case?
\end{problem}
\begin{challenging} 
A natural approach would be to argue by symmetry that the result must be \VCD or \VDC, and rule out \VCD on the grounds that \dupoc is unexploitable.  However, the proposed symmetry argument isn't quite valid: \dupoc is not a perfect mirror image of \cupod, because the letters \VC and \VD are not switched inside the \Verb{proof_checker} subroutine.  It seems we need some way to reason about one agent's proof search running out, while the other agent is unable to prove that the first agent's proof search runs out.
\end{challenging}

%
%
%

\section{Generalizability of results}
\subsection{Nondeterministic interactions}
A common question to ask about these results is whether they depend on the `rigidity' of logic in some way.  The answer is, roughly speaking, no.  To see this, let us examine a nondeterministic version of DUPOC, using a "\Verb{Prob}" symbol to write proofs about the probability of random events.  This agent tries to show that the opponent cooperates with probability at least $q$, and if successful, cooperates with probability $q$:

\newcommand{\pdupoc}{\Verb{PDUPOC(k)}\xspace}
\begin{minted}{python}
# Probabilistic DUPOC (PDUPOC):
def PDUPOC(K,Q):
  def PDUPOC_k(opp_source):
    k = K; q = Q
    if proof_search(k, opp_source, 
    "Prob('opp(DUPOC_k.source) == C' >= q")):
      if random.uniform(0,1) <= q:
        return C
    else:
      return D
  PDUPOC_k.source = ... # the last 8 lines with K,Q filled in
  return DUPOC_k
\end{minted}

\begin{theorem}\label{thm:pdupoc}
For large $k$ and $q\geq 0.5$, with probability at least $2q-1$, \\
\Verb{outcome(PDUPOC(k,q),PDUPOC(k,q))==(C,C)}.  
In particular, for $q \approx 1$, the probability of mutual cooperation is also $\approx 1$. 
\end{theorem}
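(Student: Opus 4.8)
The plan is to rerun the argument behind Theorem~\ref{thm:dupoc} with a \emph{probabilistic} self-referential statement in place of the deterministic one: I will first show that a single execution of \Verb{PDUPOC_k} against its own source returns \VC with probability at least $q$, and then convert that marginal bound into a bound of $2q-1$ on the probability that \emph{both} executions inside \Verb{outcome} return \VC, using nothing more than inclusion--exclusion. Fix a rational $q\in[\tfrac12,1]$. As is standard, I assume the self-reference in \Verb{PDUPOC_k} is wired up so that the hypothesis string it searches for, once \Verb{opp} is instantiated to \Verb{PDUPOC_k}, is exactly the formula $p[k]$ defined next (the \Verb{DUPOC_k}/\Verb{PDUPOC_k} slips in the displayed code are read as typos), and I assume that the proof system $S$, extended with the \Verb{Prob} symbol as in \citet{critch2019parametric}, reasons soundly about \Verb{random.uniform} and about how \Verb{Prob}-statements propagate through a conditional branch --- in particular, that $S$ proves that \Verb{random.uniform(0,1)} is at most $q$ with probability exactly $q$. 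Let $p[k]$ be the formula in the proof language of $S$, with single free parameter $k$ and $q$ a fixed rational constant, asserting that \Verb{PDUPOC_k} applied to its own source returns \VC with probability at least $q$; this is the probabilistic analogue of the fixed point used for Theorem~\ref{thm:dupoc}, and it is precisely what \Verb{PDUPOC_k(PDUPOC_k.source)} is trying to verify.

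\textbf{The main step} is to apply \pbl{} with $p[k]$ as above, $f(k)=k$, and $k_1=0$, which requires verifying the ``potentially self-fulfilling'' hypothesis
\[
\proves{}\ \Forall{k>0}{\bxof{k}{p[k]}\to p[k]}.
\]
Arguing inside $S$: assume $\bxof{k}{p[k]}$, so that some string of length at most $k$ encodes a valid proof of $p[k]$. When \Verb{PDUPOC_k(PDUPOC_k.source)} runs, its \Verb{proof_search} call has length bound $k$, enumerates every string of length at most $k$, and --- by the assumed correctness of \Verb{proof_checker} --- finds that proof and returns \Verb{True}. Conditioned on that, \Verb{PDUPOC_k} draws \Verb{random.uniform(0,1)} and returns \VC whenever the draw is at most $q$, an event of probability exactly $q$; hence \Verb{PDUPOC_k} applied to its own source returns \VC with probability at least $q$, i.e.\ $p[k]$ holds. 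Every step of this reasoning is of a kind $S$ can carry out, so the displayed implication is provable in $S$. Since $f(k)=k$ is increasing, computable, and satisfies $f(k)\succ\Oo(\lg k)$, \pbl{} yields a threshold $k_2$ with $\proves{}\Forall{k>k_2}{p[k]}$, and by soundness of $S$ the statement $p[k]$ is \emph{true} for every $k>k_2$.

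\textbf{It remains} to pass from this marginal bound to the joint bound. Fix $k>k_2$. Evaluating \Verb{outcome(PDUPOC(k,q),PDUPOC(k,q))} runs \Verb{PDUPOC_k(PDUPOC_k.source)} twice on the joint space of all the randomness drawn; note that the two calls share a single pseudorandom stream, so independence cannot be assumed. Let $A$ and $B$ be the events that the first and the second execution respectively return \VC. By $p[k]$ and symmetry, $\Pr[A]\ge q$ and $\Pr[B]\ge q$, and the event ``\Verb{outcome == (C,C)}'' is exactly $A\cap B$. Then
\[
\Pr[A\cap B] \;=\; \Pr[A]+\Pr[B]-\Pr[A\cup B] \;\ge\; \Pr[A]+\Pr[B]-1 \;\ge\; 2q-1 ,
\]
which is the asserted bound; for $q\approx 1$ it is $\approx 1$, and the hypothesis $q\ge\tfrac12$ is needed only to keep $2q-1$ non-negative.

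\textbf{Main obstacle.} The delicate part is the main step, and specifically making the probabilistic reasoning genuinely internal to $S$: one needs assumptions analogous to those of \ref{app:proofsystem} for the \Verb{Prob}-extended system --- a sound internal account of the distribution produced by \Verb{random.uniform}, correct propagation of \Verb{Prob}-statements through the \Verb{if}-branch of \Verb{PDUPOC_k}, and the expected input/output behaviour of \Verb{proof_checker} and \Verb{proof_search} on \Verb{Prob}-hypotheses --- together with the (routine) bookkeeping that the internal search bound $k$ dominates the proof-expansion function $f(k)=k$ required by \pbl. Granting those, exactly as the deterministic results already grant their counterparts for $S$, the remainder is a direct instantiation of \pbl{} plus the two-line inclusion--exclusion estimate, and the $q\approx 1$ corollary needs nothing further.
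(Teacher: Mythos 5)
Your proposal is correct and follows essentially the same route as the paper: instantiate \pbl{} with $p[k]$ being the statement that \Verb{PDUPOC(k,q)} cooperates against its own source with probability at least $q$, $f(k)=k$, $k_1=0$, and then pass from the marginal bound to the joint bound via the elementary fact that two events of probability at least $q$ have a conjunction of probability at least $2q-1$. You merely spell out the verification of the self-fulfilling hypothesis and the probabilistic-soundness assumptions that the paper leaves implicit (and omit only the paper's side remark that independence would improve the bound to $q^2$), so there is nothing substantive to add.
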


\begin{proof} This follows directly from \pbl, using the substitutions \\
$p[k]=$\Verb{(Prob('PDUPOC(k,q)(PDUPOC(k,q).source)==C') >= q)}, $f(k) = k$, and $k_1=0$.  We use the fact that two events with probability $q$ must have a conjunction with probability at least $2q-1$.  If the sources of randomness are known to be independent, the bound $2q-1$ can be increased to $q^2$.
\end{proof}

\subsection{More efficient \Verb{proof_search} methods}\label{sec:efficiency}

For simplicity of analysis, we assumed that our \Verb{proof_search} method searches alphabetically through arbitrary strings until it finds a proof, which is terribly inefficient, taking exponential time in the length of the proof.  This problem has been noted as potentially prohibitive~\citep{oesterheld2019robust}.

However, much more efficient strategies are possible.  As a trivial example, if \Verb{string_generator} is reconfigured to output, as its first string, the proof of Theorem \ref{thm:dupoc}, then both agents will terminate their searches in a tiny fraction of a second and cooperate immediately!  
Indeed, the proof of \pbl makes no use of the order in which potential proofs are examined.  In reality, the details of the proofs employing \pbl will need to vary based on the particularities of the implementations of each agent, so heuristic searches such as those used to complete proof goals in hybrid proof-and-programming languages such as HOL \citep{nipkow2002isabelle,klein2009sel4} and Coq \citep{barras1997coq,chlipala2013certified} will likely be crucial to automating these strategies in real-world systems.  And of course, there is a trade-off between flexibility on implementation standards and speed of verification.  This poses an interesting research project:

\begin{problem}\label{prob:holdupoc}
Implement \Verb{DUPOC} using heuristic proof search in HOL/ML or Coq.  Can \Verb{outcome(DUPOC(k),DUPOC(k))} run and halt with mutual cooperation on a present-day retail computer?  We conjecture the answer is yes.  If so, how much can the implementations of the two agents be allowed to vary while cooperative halting is preserved?
\end{problem}

\section{Results, Part II: Strategies that verify conditionals}

\subsection{Rewarding conditional cooperation: CIMCIC}
Notice how a DUPOC's strategy is to prove that the opponent is going to cooperate \emph{unconditionally}.  In other words, a DUPOC doesn't check ``If I cooperate, then the opponent cooperates''; rather, it checks the stronger statement ``The opponent cooperates''.  The following is the ``conditional cooperation'' version of DUPOC:

\begin{minted}{python}
# "Cooperate If My Cooperation Implies Cooperation  from
# the opponent" (CIMCIC):
def CIMCIC(k):
  def CIMCIC_k(opp_source):
    if proof_search(k, opp_source,
    "(CIMCIC_k(opp.source)==C) => (opp(CIMCIC_k.source)==C)"):
      return C
    else:
      return D
  CIMCIC_k.source = ... # the last 6 lines with k filled in
  return CIMCIC_k
\end{minted}

\begin{proposition} CIMCIC is unexploitable, i.e., \Verb{outcome(CIMCIC(k),opp)} is never \Verb{(C,D)} for any any opponent. 
\end{proposition}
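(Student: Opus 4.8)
The plan is to argue by contradiction, following exactly the template of the proofs of Propositions~\ref{prop:cupod:doesntexploit} and \ref{prop:dupoc:unexploitable}. Suppose that for some opponent \Verb{opp} and some $k$ we had \Verb{outcome(CIMCIC(k),opp) == (C,D)}; that is, \Verb{CIMCIC(k)(opp.source) == C} while \Verb{opp(CIMCIC(k).source) == D}. The goal is to derive a contradiction using only the soundness of \Verb{proof_checker} (the same assumption invoked in the earlier propositions) together with the structure of the \Verb{CIMCIC} code.

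The first step is to see what must have happened inside \Verb{CIMCIC(k)} for it to return \VC. By inspection of the code, \Verb{CIMCIC_k} returns \VC only if its call to \Verb{proof_search} succeeds, i.e., only if there exists a proof of at most $k$ characters that the implication \Verb{(CIMCIC_k(opp.source)==C) => (opp(CIMCIC_k.source)==C)} holds. Applying the soundness assumption on \Verb{proof_checker}, this implication is therefore a true statement about the actual behavior of the two agents.

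The second step is to discharge the antecedent. Our assumption is precisely that \Verb{CIMCIC(k)(opp.source) == C}, which is the antecedent of the verified implication, so by modus ponens the consequent \Verb{opp(CIMCIC(k).source) == C} must also be true. This contradicts the assumption \Verb{opp(CIMCIC(k).source) == D} (as \VC and \VD are distinct actions), and no such $k$ and \Verb{opp} exist.

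I do not expect a genuine obstacle here; the only point worth flagging is that, unlike the DUPOC case, the hypothesis \Verb{CIMCIC} verifies is a conditional whose antecedent refers to \Verb{CIMCIC_k} itself. This self-reference causes no trouble because the argument never needs to unfold it: we only use that the conditional has been soundly proven and that its antecedent is (by assumption) true. I would also add a one-line remark that this proposition says nothing about \emph{when} \Verb{CIMCIC(k)} cooperates---establishing that would require reasoning about the proof search itself, as in Theorems~\ref{thm:cupod} and \ref{thm:dupoc}---only that its cooperation is never exploited.
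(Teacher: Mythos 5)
Your proof is correct and follows essentially the same route as the paper's: cooperation by \Verb{CIMCIC(k)} can only result from a successful proof search, soundness of the proof system then makes the verified conditional true, and modus ponens with the antecedent \Verb{CIMCIC(k)(opp.source)==C} yields \Verb{opp(CIMCIC(k).source)==C}, ruling out \Verb{(C,D)}. If anything, your write-up is slightly more careful than the paper's, which states the verified hypothesis with a garbled antecedent and leaves the modus ponens step implicit.
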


\begin{proof}
This proof works the same as that of Proposition \ref{prop:cupod:doesntexploit}.  Specifically, 
if \Verb{CIMCIC(k)(opp.source)==C}, it must be because \Verb{CIMCIC(k)} managed to prove \Verb{CIMCIC(k)}'s cooperative criterion, namely \\
\indent \Verb{(opp(CIMCIC(k).source)==C) => (opp(CIMCIC_k.source)==C)}.\\
Under the assumption that the \Verb{proof_check} function employed by \Verb{CIMCIC(k)} is sound, this implication must hold, implying \Verb{opp(CIMCIC(k).source)==C}, so the outcome cannot be \VCD.
\end{proof}

\vspace{1ex}
\begin{exm} Evaluate the following (answered below):
\begin{enumerate}[label=\alph*)]
    \item \medium \Verb{Outcome(CIMCIC(k),CIMCIC(k))}
    \item \difficult \Verb{Outcome(DUPOC(k),CIMCIC(k))}
\end{enumerate}
\end{exm}

Since CIMCIC's cooperative criterion is \emph{conditional}, one can imagine two CIMCICs being in a kind of stand-off, with each one thinking ``I'll cooperate if I know it will imply you cooperate'', but never achieving cooperation because there are no provisions in its code to break the cycle of ``I'll do it if you will'' reasoning.  However, the following theorem shows that, by \pbl, no such provision is necessary:  

\begin{theorem}\label{thm:cimcic} For large $k$, 
\begin{enumerate}[label=\alph*)]
    \item \Verb{outcome(CIMCIC(k),CIMCIC(k)) == (C,C)}
    \item \Verb{outcome(DUPOC(k),CIMCIC(k)) == (C,C)}
\end{enumerate}
\end{theorem}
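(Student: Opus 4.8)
\medskip
\noindent\textbf{Proof proposal.} For part (a) I would argue directly, without appealing to \pbl. When \Verb{CIMCIC(k)} is run against a copy of its own source, its proof search looks for a length-$\le k$ proof of its cooperation criterion \Verb{(CIMCIC_k(opp.source)==C) => (opp(CIMCIC_k.source)==C)} with \Verb{opp} now bound to \Verb{CIMCIC(k)} itself. Since \Verb{opp.source} and \Verb{CIMCIC_k.source} are then the same string, the two sides \Verb{CIMCIC_k(opp.source)} and \Verb{opp(CIMCIC_k.source)} denote the same value, so the criterion is an implication whose antecedent and consequent have the same truth value; a proof that checks the two source strings agree and concludes the implication holds fits in $\Oo(\lg k)$ characters. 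Hence for all large $k$ the search succeeds, \Verb{CIMCIC(k)} returns \VC against itself, and the outcome is \VCC. (The apparent ``I'll cooperate if that would imply you cooperate'' stand-off never materializes in self-play, because ``my cooperation'' and ``your cooperation'' are literally the same event.)

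For part (b) I would invoke \pbl exactly once. Write $A$ for the proposition ``\Verb{DUPOC(k)(CIMCIC(k).source)==C}'' (DUPOC cooperates with CIMCIC) and $B$ for ``\Verb{CIMCIC(k)(DUPOC(k).source)==C}'' (CIMCIC cooperates with DUPOC); it suffices to show $\proves{} A\wedge B$ for all large $k$, since soundness of $S$ then makes $A\wedge B$ true, i.e.\ \Verb{outcome(DUPOC(k),CIMCIC(k))==(C,C)}. First I would record two ``code-analysis'' facts, each verifiable in $S$ with a proof uniform in $k$ under the standing assumptions of \ref{app:proofsystem}: \emph{(i)}~$\proves{} \bxof{k}{B}\to A$, because against \Verb{CIMCIC(k)} the proposition that \Verb{DUPOC(k)}'s search scans for is precisely $B$, so a length-$\le k$ proof of $B$ makes that search return \Verb{True} and \Verb{DUPOC(k)} return \VC; and \emph{(ii)}~$\proves{} \bxof{k}{B\to A}\to B$, because against \Verb{DUPOC(k)} the proposition that \Verb{CIMCIC(k)}'s search scans for is precisely ``$B\to A$'', so a length-$\le k$ proof of ``$B\to A$'' makes \Verb{CIMCIC(k)} return \VC.

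The crux is the choice of self-fulfilling formula for \pbl, and the point I would stress is to take $p[k]=A\wedge B$ --- the \emph{conjunction} asserting mutual cooperation --- rather than either agent's action in isolation. With $f(k)=\floor{k/2}$ (any increasing $f$ with $f(k)\succ\Oo(\lg k)$ and $f(k)+\Oo(\lg k)\le k$ will do), I would verify in $S$, uniformly in $k$, the self-fulfilling condition $\proves{}\Forall{k>k_1}{\bxof{f(k)}{A\wedge B}\to A\wedge B}$ as follows. Given a proof $\pi$ of $A\wedge B$ with $|\pi|\le f(k)$: appending one $\wedge$-elimination step to $\pi$ yields a proof of $B$ of length $\le f(k)+\Oo(\lg k)\le k$, so $\bxof{k}{B}$ holds and, by (i), $A$; and appending one $\wedge$-elimination step (extracting $A$) followed by one $\to$-introduction step discharging a dummy hypothesis $B$ yields a proof of ``$B\to A$'' of length $\le f(k)+\Oo(\lg k)\le k$, so $\bxof{k}{B\to A}$ holds and, by (ii), $B$; hence $A\wedge B$. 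Then \pbl returns a threshold $k_2$ with $\proves{}\Forall{k>k_2}{A\wedge B}$, and soundness of $S$ gives \VCC. Since part (a) holds above its own (smaller) threshold, both parts hold for all large $k$. Note that, in contrast with the \Verb{DUPOC} versus \Verb{CUPOD} situation, both agents here reason about \emph{cooperation}, so this is an ordinary ``positive'' \lob construction and no reasoning about a proof search failing is needed.

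The main obstacle is not the \lob machinery --- \pbl is used as a black box --- but the length accounting that feeds it: every step used to convert an assumed short proof of $p[k]=A\wedge B$ into the exact string each agent's \Verb{proof_search} scans for ($B$ for \Verb{DUPOC}, ``$B\to A$'' for \Verb{CIMCIC}) must cost only $\Oo(\lg k)$ characters, so that it stays within the agents' budget $k$ while still leaving $f(k)\succ\Oo(\lg k)$ for \pbl. Taking $p[k]$ to be the conjunction is exactly what secures this: $A$ and $B$ are then literally subproofs of the assumed proof, so no costly re-derivation of one agent's action from the other's is required. (Setting $p[k]=B$ alone still works but is less clean, since recovering a short proof of ``$B\to A$'' then forces $S$ to re-verify the assumed proof of $B$ internally, a step with multiplicative rather than merely additive length overhead.) One must also confirm that facts (i)--(ii), which ask $S$ to reason about \Verb{proof_search} and \Verb{proof_checker} as computable functions, admit proofs uniform in $k$; this is routine given \ref{app:proofsystem}, being the same reasoning already used to instantiate \pbl for Theorems \ref{thm:cupod} and \ref{thm:dupoc}.
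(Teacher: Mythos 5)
Your proposal is correct, but it splits the theorem differently from the paper. For part (b) your argument is essentially the paper's: the paper also applies \pbl with $p[k]$ the statement \Verb{outcome(DUPOC(k),CIMCIC(k))==(C,C)} (your $A\wedge B$) and $f(k)=\floor{k/2}$, observing that a short proof of that outcome yields, with a few extra characters, proofs of exactly the two strings the agents scan for ($B$ for \dupoc and ``$B\to A$'' for the conditional cooperator); your facts (i)--(ii) and the explicit $\Oo(\lg k)$ bookkeeping simply make the paper's phrase ``a few additional lines comprising some number of characters $c$'' precise. For part (a), however, you take a genuinely different route: the paper again invokes \pbl (with $p[k]$ the mutual-cooperation outcome), whereas you observe that in self-play the search target \Verb{(CIMCIC_k(opp.source)==C) => (opp(CIMCIC_k.source)==C)} has antecedent and consequent that provably denote the same value once $S$ checks that \Verb{opp.source} and \Verb{CIMCIC_k.source} are the identical string (and the agent is deterministic), so the target is provable outright in $\Oo(\lg k)$ characters and no self-referential machinery is needed. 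This is valid and more elementary---indeed the implication displayed in the paper's own proof of (a) already has syntactically identical antecedent and consequent, which tacitly supports your shortcut---and it yields an explicit small threshold rather than the unspecified $k_2$ returned by \pbl, while also explaining directly why the ``I'll do it if you will'' stand-off dissolves in self-play. What the paper's uniform L\"{o}bian treatment buys in exchange is a single template that covers (a), (b), and Theorem \ref{thm:dimcid} alike, and it is the form of the argument that survives when the two programs are not syntactically identical, where your tautology argument is unavailable. The only caveat is that your direct proof of (a) requires $S$ to identify the name \Verb{CIMCIC_k} in the hypothesis string with the function defined by the quined source; but the paper's own proofs rest on the same implicit convention, so this is not a gap.
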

\begin{proof}[Proof of (a)] A short proof of the outcome \VCC will lead, in a few additional lines comprising some number of characters $c$, to a proof of the material implication
\vspace{-1.1ex}
\begin{minted}{python}
"(CIMCIC_k(CIMCIC_k.source)==C)=>(CIMCIC_k(CIMCIC_k.source)==C)"
\end{minted}
\vspace{-1.1ex}
This in turn will cause the agents to cooperate, bringing about the outcome \VCC.  Thus, we have a ``self-fulfilling prophecy'' situation of the kind where \pbl can be applied.  Specifically, if we let $f(k)=\floor{k/2}$, $k_1=2c$, and $p[k]$ be the statement
\vspace{-1.1ex}
\begin{minted}{python}
"outcome(CIMCIC(k),CIMCIC(k))==(C,C)"
\end{minted}
\vspace{-1.1ex}

\noindent then the conditions of \pbl are satisfied.  Therefore, for some constant $k_2$, we have $\vdash \Forall{k>k_2}{p[k]}$.
\end{proof}

\begin{proof}[Proof of (b)] Again, a short proof of the outcome \VCC will lead, in a few additional lines comprising some number of characters $c$, to a proof of CIMCICs' cooperation condition, namely
\vspace{-1.1ex}
\begin{minted}{python}
"(CIMCIC_k(DUPOC_k.source)==C)=>(DUPOC_k(CIMCIC_k.source)==C)"
\end{minted}
\vspace{-1.1ex}
as well as DUPOC's cooperation condition, 
\vspace{-1.1ex}
\begin{minted}{python}
"CIMCIC(k)(DUPOC(k).source)==C"
\end{minted}
\vspace{-1.1ex}
Thus, letting $f(k)=\floor{k/2}$, $k_1=2c$, and $p[k]$ be the statement
\vspace{-1.1ex}
\begin{minted}{python}
"outcome(DUPOC(k),CIMCIC(k))==(C,C)"
\end{minted}
\vspace{-1.1ex}
satisfies the conditions of \pbl.  Therefore, for some constant $k_2$, we have $\vdash \Forall{k>k_2}{p[k]}$.
\end{proof}

\begin{problem}\label{prob:cupodcimcic}
What is \Verb{outcome(CUPOD(k),CIMCIC(k))}?
\end{problem}
\newcommand{\cimcic}{\Verb{CIMCIC(k)}\xspace}

\begin{challenging} Write $CC, CD, DC,$ and $DD$ as shorthand for \Verb{(C,C)}, \Verb{(C,D)}, \Verb{(D,C)}, and \Verb{(D,D)} respectively, and write $C*$ for ``$CC\textrm{ or }CD$'' and $*C$ for ``$CC\textrm{ or }DC$''.  We know that the outcome $DC$ is impossible, because \cupod never exploits its opponent.  However, the other possibilities are difficult to prove or rule out. If the outcome is $CC$, it means \cimcic manages to prove mutual cooperation, which is consistent with \cupod failing to prove defection.  If the outcome is $DD$, then it means \cupod manages to prove that \cimcic defects, but somehow \cimcic fails to find a short proof \emph{of the fact that \cupod proves that \cimcic defects}. Otherwise \cimcic could prove the material conditional $*C \to C*$ with just a few more lines, which would cause it to cooperate, a contradiction.  If the outcome is $CD$, it means both agents' proof searches run out.  It's not immediately clear which of these outcomes will actually obtain.
\end{challenging}

\subsection{Preempting exploitation: DIMCID}
Another interesting strategy to consider is to preempt exploitation by defecting if the opponent would exploit you:

\begin{minted}{python}
# "Defect if My Cooperation Implies Defection from 
# the opponent (DIMCID):
def DIMCID(k):
  def DIMCID_k(opp_source):
    if proof_search(k, opp_source, 
    "(DIMCID_k(opp.source)==C) => (opp(DIMCID_k.source)==D)"):
      return D
    else:
      return C
  DIMCID_k.source = ... # the last 6 lines with k filled in
  return DIMCID_k
\end{minted}

\begin{exm} \easy Evaluate the following \answer{\VCC and \VDD, respectively.}: 
\begin{enumerate}[label=\alph*)]
    \item \Verb{outcome(DIMCID(k), CB)}
    \item \Verb{outcome(DIMCID(k), DB)}
\end{enumerate}
\end{exm}
\vspace{1ex}
\begin{exm} \medium Evaluate the following (answered below):
\begin{enumerate}[label=\alph*)]
\item \Verb{outcome(DIMCID(k), DIMCID(k))}
\item \Verb{outcome(CUPOD(k), DIMCID(k))}
\end{enumerate}
\end{exm}

\vspace{1ex}

\begin{theorem}\label{thm:dimcid} For large $k$,

\begin{itemize} 
    \item \Verb{outcome(DIMCID(k), DIMCID(k)) == (D,D)}
    \item \Verb{outcome(CUPOD(k), DIMCID(k)) == (D,D)}.
\end{itemize}
\end{theorem}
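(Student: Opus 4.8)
\medskip\noindent\emph{Proof strategy (proposal).}\quad The plan is to get both claims from \pbl, following the template of the proof of Theorem \ref{thm:cimcic}: for each claim I would choose a statement $p[k]$ asserting the stated outcome, show that $S$ proves $p[k]$ to be ``potentially self-fulfilling for large $k$,'' i.e.\ $\vdash \Forall{k>k_1}{\bxof{f(k)}{p[k]}\to p[k]}$, and then invoke \pbl to obtain a threshold $k_2$ with $\vdash\Forall{k>k_2}{p[k]}$, whence (soundness of $S$) the outcome holds for all large $k$. In both cases I would take $f(k)=\floor{k/2}$ and let $c$ be a uniform constant bounding the few ``proof-surgery'' steps used below, so that $k_1:=2c$ makes $f(k)+c\le k$ for all $k>k_1$; then any proof of length at most $f(k)+c$ that I build inside $S$ will be discovered by a length-$k$ proof search.

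For \Verb{outcome(DIMCID(k),DIMCID(k))==(D,D)} I would let $p[k]$ be that statement, noting that, this being self-play, it unfolds (provably in $S$) to \Verb{DIMCID_k(DIMCID_k.source)==D}. Reasoning inside $S$: assuming $\bxof{f(k)}{p[k]}$ there is a proof of length $\le f(k)$ that \Verb{DIMCID_k} defects against itself; appending a couple of lines converts this into a proof of the material implication with antecedent \Verb{DIMCID_k(DIMCID_k.source)==C} and consequent \Verb{DIMCID_k(DIMCID_k.source)==D} (the consequent having just been established), and that implication is exactly the defection criterion \Verb{DIMCID(k)} checks against itself. The new proof has length $\le f(k)+c\le k$, so the search inside \Verb{DIMCID(k)(DIMCID_k.source)} returns \Verb{True}, \Verb{DIMCID(k)} returns \VD, and $p[k]$ follows. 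Every step is formalizable in $S$, so the hypothesis of \pbl holds.

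For \Verb{outcome(CUPOD(k),DIMCID(k))==(D,D)} I would let $p[k]$ be that statement, which unfolds to the conjunction of two statements: $A$, namely \Verb{CUPOD_k(DIMCID_k.source)==D}, and $B$, namely \Verb{DIMCID_k(CUPOD_k.source)==D}. Reasoning inside $S$: from $\bxof{f(k)}{p[k]}$ extract short proofs of $A$ and of $B$. Now $B$ is \emph{verbatim} the hypothesis \Verb{CUPOD(k)} searches for (it is \Verb{opp(CUPOD_k.source)==D} with \Verb{opp} bound to \Verb{DIMCID_k}), so \Verb{CUPOD(k)}'s search succeeds and it returns \VD, confirming $A$; and $A$ is the consequent of the implication that \Verb{DIMCID(k)} searches for against \Verb{CUPOD(k)} (antecedent \Verb{DIMCID_k(CUPOD_k.source)==C}, consequent \Verb{CUPOD_k(DIMCID_k.source)==D}), so a couple of extra lines give a proof of that implication of length $\le f(k)+c\le k$, whence \Verb{DIMCID(k)}'s search succeeds and it returns \VD, confirming $B$. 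Hence $p[k]$ holds, the reasoning formalizes in $S$, and \pbl applies again.

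The hard part will not be the length arithmetic but the $S$-internal formalization of the step ``the bounded proof search returns \Verb{True}.'' For this I would lean on the assumptions of \ref{app:proofsystem} linking $\bxof{m}{\cdot}$ to \Verb{proof_search} --- that a hypothesis possessing a valid proof of length $\le k$ forces \Verb{proof_search(k,...,hypothesis)} to return \Verb{True} --- and I would need to check that the surgery overhead $c$ (unfolding \Verb{outcome}, extracting a conjunct, weakening a proof of a consequent to a proof of the implication) is genuinely independent of $k$. One conceptual point I would make explicit: the all-cooperate outcome is self-fulfilling only in a naive sense, since certifying it inside $S$ would require $S$ to rule out \emph{every} short proof of defection, i.e.\ to prove its own consistency; whereas the all-defect outcome is \emph{provably} self-fulfilling, because ``a short proof of defection exists'' is a positive ($\Sigma_1$) statement $S$ can verify. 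That asymmetry --- already driving Theorems \ref{thm:cupod} and \ref{thm:cimcic} --- is exactly why \pbl selects \VDD here, so reusing the $S$-internal bookkeeping from those proofs should keep the final write-up short.
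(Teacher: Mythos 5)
Your proposal is correct and follows essentially the same route as the paper's own proof: for each part you take $p[k]$ to be the stated \Verb{(D,D)} outcome, $f(k)=\floor{k/2}$, $k_1=2c$, show inside $S$ that a length-$\le f(k)$ proof of $p[k]$ extends by $c$ characters into proofs of exactly the hypotheses the two agents search for (the self-referential implication for \Verb{DIMCID}, and \Verb{DIMCID_k(CUPOD_k.source)==D} for \Verb{CUPOD}), and then invoke \pbl. You simply spell out the conjunct-extraction and length bookkeeping that the paper's proof leaves terse, so no gap to report.
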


\begin{proof}[Proof of (a)] A short proof of the outcome \VDD will lead, in a few additional lines comprising some number of characters $c$, to a proof of the material implication
\vspace{-1.1ex}
\begin{minted}{python}
"(DIMCID_k(DIMCID_k.source)==C)=>(DIMCID_k(DIMCID_k.source)==D)"
\end{minted}
\vspace{-1.1ex}
which in turn will cause the agents to defect, bringing about the outcome \VDD.  Thus, we have a ``self-fulfilling prophecy'' situation of the kind where \pbl can be applied.  Specifically, if we let $f(k)=\floor{k/2}$, $k_1=2c$, and $p[k]$ be the statement
\vspace{-1.1ex}
\begin{minted}{python}
"outcome(DIMCID(k),DIMCID(k))==(D,D)"
\end{minted}
\vspace{-1.1ex}

\noindent then the conditions of \pbl are satisfied.  Therefore, for some constant $k_2$, we have $\vdash \forall k>k_2, p[k]$.
\end{proof}

\begin{proof}[Proof of (b)] Again, a short proof of the outcome \VDD will lead, in a few additional lines comprising some number of characters $c$, to a proof of DIMCID's defection condition, namely
\vspace{-1.1ex}
\begin{minted}{python}
"(DIMCID_k(CUPOD_k.source)==C)=>(CUPOD_k(DIMCID_k.source)==D)"
\end{minted}
\vspace{-1.1ex}
as well as CUPOD's defection condition, 
\vspace{-1.1ex}
\begin{minted}{python}
"DIMCID(k)(CUPOD(k).source)==D"
\end{minted}
\vspace{-1.1ex}
Thus, letting $f(k)=\floor{k/2}$, $k_1=2c$, and $p[k]$ be the statement
\vspace{-1.1ex}
\begin{minted}{python}
"outcome(CUPOD(k),DIMCID(k))==(D,D)"
\end{minted}
\vspace{-1.1ex}
satisfies the conditions of \pbl.  Therefore, for some constant $k_2$, we have $\vdash \Forall{k>k_2}{p[k]}$.
\end{proof}

\begin{problem}\label{prob:dupocdimcid} What is \Verb{outcome(DUPOC(k),DIMCID(k))}?
\end{problem}
\begin{challenging} This problem is very similar to Open Problem \ref{prob:cupodcimcic}: one of the agent's proof searches might fail while the other succeeds.  If so, which way does it go?  If both proof searches fail, why is that?
\end{challenging}

\section{Discussion}

\subsection{Applicability to non-logical reasoning processes}

Must the agents in these theorems form their beliefs using logical proofs, or would the same results hold true if they employed other belief-forming procedures, such as machine learning?  What abstract properties of these proofs and proof systems are key to applying \pbl in Theorems \ref{thm:cupod},  \ref{thm:dupoc},  \ref{thm:pdupoc}, \ref{thm:cimcic}, and \ref{thm:dimcid}?  

The following features of the agents' reasoning capabilities are key to the proof of \pbl, and hence to the results of this paper:
\begin{enumerate}
  \item The proof language for representing the agents' beliefs needs to be expressive enough to represent numbers and computable functions and to introduce and expand abbreviations.
  \item There must be a process the agents can follow for deriving beliefs from other beliefs (writing a proof is such a process).
  \item  The proof language must also be able to represent belief derivations (e.g., proofs) as objects in some way, and refer to the cost (e.g., in time or space) of producing various belief derivations.
\end{enumerate}

Agents lacking some of these reasoning capabilities should not be expected to behave in accordance with the theorems of this paper.  However, these capabilities are all features of \emph{general intelligence}, in the sense that a reasoning process that lacks these capabilities can be made more deductively powerful by adding them.  Thus, there is some reason to expect that agents designed to exhibit highly general reasoning capabilities may be amenable to these results.  The task of formalizing highly general definitions of ``beliefs'' and ``belief derivations'' could itself make for interesting follow-up work.

\subsection{Further open problems}
\newcommand{\problemarea}[1]{\textbf{In #1:}}

As mentioned earlier in Section \ref{sec:rw}, \citet{barasz2014robust} exhibit a finite-time algorithm written in Haskell for settling interactions between agents that write proofs about each other, which unfortunately assumes the agents themselves have unbounded (infinite) computation with which to conduct their proof searches.  
The Haskell algorithm---based on Kripke semantics for \godel--\lob provability logic \citep[Chapter 4]{boolos1995logic}---is sound for settling these unbounded cases using abstract mathematics and is efficient enough to run tournaments between dozens of their abstractly specified agents, which they call \emph{modal agents}.  The results of \citet{critch2019parametric} go some of the way toward generalizing their results for bounded versions of very simple modal agent interactions, but not for all interactions, such as those in Open Problems \ref{prob:dupoccupod}, \ref{prob:cupodcimcic}, and \ref{prob:dupocdimcid}.

Thus, it would be quite interesting to work out the details of how and when theorems in the logic used by \citet{barasz2014robust} (\godel--\lob provability logic) have analogues with bounded proof lengths, and if they could really be applied to make resource-bounded (and hence physically realizable) analogues of all the modal agents of \citet{barasz2014robust} and \citet{lavictoire2014program}.  At this level of generality, it would be useful to also carry through those details to ascertain concrete proof lengths needed for any given game outcome, so their Haskell algorithm could be modified to return the value of $k$ needed for each proof search to find the requisite proofs:

\begin{problem}\label{prob:bgl} Generalize \pbl to a bounded analogue of \godel--\lob that systematically tracks the proof lengths necessary or sufficient for each instance of the provability operator $\Box$.
\end{problem}

\begin{problem}\label{prob:haskellupgrade}
Apply the result of Open Problem \ref{prob:bgl} to improve the Haskell algorithm at \url{github.com/klao/provability}, so that it returns the proof length needed for each proof search conducted by each agent in a given open-source interaction.
\end{problem}

\citet{lavictoire2014program} exhibit a computationally unbounded agent called PrudentBot, which is similar to DUPOC except that it uses employs an additional proof search that allows it to defect against CooperateBot.  Such agents are particularly interesting at a population scale because they have the potential to drive CooperateBots out of existence, which in turn would make it more difficult for DefectBot to survive.  Hence we ask:

\vspace{1ex} 
\begin{problem}\label{prob:prudentbot}
Does a computationally bounded version of PrudentBot \citep{lavictoire2014program} exist?
\end{problem}

If so, questions regarding population dynamics among CooperateBots, DefectBots, DUPOCs, and PrudentBots would be interesting to examine, especially if the game payoffs take into account the additional cost of proof-searching incurred by PrudentBot.

\newcommand{\VE}{\Verb{E}\xspace}
Finally, to begin generalizing this theory to games with more than two actions, consider next an extended Prisoners' Dilemma with a third option, \VE for ``encroachment'', that is even more tempting and harmful than defection:

\begin{game}{3}{3}
      & $C$     & $D$   & $E$       \\
$C$   & $2,2$   & $0,3$ & $-2,4$     \\
$D$   & $3,0$   & $1,1$ & $-1,2$ \\
$E$   & $4,-2$  & $2,-1$ & $0,0$
\end{game}

\medskip

\newcommand{\eupod}{\Verb{EUPOD(k)}\xspace}
\noindent In such a setting, it is natural for an agent to first target mutual cooperation, and failing that, target mutual defection, and failing that, encroach.   If the agent uses proof searches to ``target'' outcomes, this means it would first cooperate if a proof of \VCC can be found, and failing that, defect if a proof of \VDD can be found, and failing that, encroach.  Let's call this vaguely described agent concept ``CDEBot.''  An interesting opponent for CDEBot is \eupod, for ``Encroach Unless Proof of Defection,'' which is just like \dupoc with with \VD and \VE respectively in place of \VC and \VD.

\begin{problem}\label{prob:cdebot}
Implement a version of CDEBot using bounded proof searches, ensuring that it achieves \VCC with \dupoc, and \VDD with \eupod.
\end{problem}

\begin{challenging} CDEBot's proof search for \VDD needs to be somehow predicated on the failure of the proof search for \VCC by one or both of the agents, and it's not immediately clear how \pbl can be applied in that setting.
\end{challenging}

\subsection{Applicability to humans and human institutions: modelling self-fulfilling prophecies}\label{sec:prophecies}

How and when might our framework be applicable to interactions between individual humans or human institutions who reason about themselves and each other?  

One important and immediate observation is that the open-source condition provides a model under which certain \emph{self-fulfilling prophecies} will turn out to be true, by virtue of the single ``speech act'' of each agent's source code being revealed.  Specifically, our main theorems (\ref{thm:cupod}, \ref{thm:dupoc}, 
\ref{thm:pdupoc}, 
\ref{thm:cimcic}, and 
\ref{thm:dimcid})  each involve a kind of self-fulfilling prophecy: upon the agents verifying that a certain outcome is going to happen, they choose to make it happen, on the basis of the verification.  For \dupoc{}s, the prophecy is ``made'' once each agent has successfully written down a proof that they will cooperate, and it ``comes true'' in their reaction to that proof (deciding to cooperate).

Self-fulfilling prophecies are known to be an important phenomenon in the social sciences, as Milton \citet{friedman1977nobel} remarked in his Nobel Lecture:

\begin{quote}
``Do not the social sciences, in which scholars are analyzing the behavior of themselves and their fellow men, who are in turn observing and reacting to what the scholars say, require fundamentally different methods of investigation than the physical and biological sciences? Should they not be judged by different criteria?  I have never myself accepted this view. [...]  In both, there is no way to have a self-contained closed system or to avoid interaction between the observer and the observed. The G\"{o}del theorem in mathematics, the Heisenberg uncertainty principle in physics, the self-fulfilling or self-defeating prophecy in the social sciences all exemplify these limitations.''
\end{quote}

\noindent Indeed, as Friedman alludes, \godel's incompleteness theorems are special cases of L\"{o}b's theorem (itself a special case of \pbl) where the self-fulfilling prophecy $p$ is a contradiction.  L\"{o}b's theorem essentially says ``If a particular statement $p$ \emph{would be} self-fulfilling (i.e., $\Box(p) \to p$) then it \emph{does} self-fulfill ($\Box(p)$ follows, and then $p$ follows).''  The logician Raymond \citet{smullyan1986logicians} also noted, in ``Logicians who reason about themselves,'' that L\"{o}b's theorem ``reflects itself in a variety of beliefs which of their own nature are necessarily self-fulfilling.''  He tells a story of a cure that works only if the recipient believes in it, and argues by L\"{o}b's theorem that the cure is therefore bound to work.  \pbl shows that this kind of argument can really be applied in the minds of bounded reasoners.

In which areas of social science do self-fulfilling prophecies play an important role?  Each of the following disciplines have included analyses of the conditions and mechanisms by which self-fulfilling prophecies can arise and affect the world, and might therefore present interesting domains in which to apply open-source agent models and \pbl more specifically:
\begin{itemize}
    \item In social psychology, \citet{merton1948self,kelley1970social,word1974nonverbal,snyder1977social,darley1980expectancy}, and at least a dozen other authors have examined conditions and mechanisms whereby believing in things can cause them to become true, such as stereotypes or interracial conflicts.
    \item In education research, \citet{rist1970student,wilkins1976concept,brophy1983research,wineburg1987self,jussim1996social} and \citet{jussim2005teacher} have examined how and when teachers' positive or negative expectations of students can cause those expectations to be fulfilled by students.
    \item In the study of management and leadership, \citet{ferraro2005economics} have examined how theories of economics and social science can cause their own validity, and  \citet{eden1992leadership,eden1984self} has examined how leaders' expectations can cause organization members to conform to those expectations.
    \item In political and international relations theory, \citet{zulaika2003self,bottici2006rethinking,verhoeven2009self,houghton2009role,frisell2009theory,ahler2014self} and \citet{swiketek2017yemen} have examined how peace and conflict can both arise as self-fulfilling prophecies.
\end{itemize}

In economics specifically, there has been considerable work examining the potential causes and effects of self-fulfilling prophecies.  Typically these studies employ dynamical systems models of repeated interactions between players, or equilibrium assumptions that do not specify how exactly the players manage to know each other's strategy.  By contrast, the open-source condition makes it clear and concrete how each player's strategy could be known to the others, and the theorems of this paper show that open-source agents do not require repeated interaction for self-fulfilling prophecies to manifest.   Being open-source can also be seen as a limiting case of an agent being nontrivially but imperfectly understood by its opponents.  Thus, in real-world institutions or software systems that are only imperfectly visible to each other, one-shot, self-fulfilling prophecies might still arise from merely probabilistic mutual knowledge, as in Theorem \ref{thm:pdupoc}.

When do economic self-fulfilling prophecies occur in the real world? Numerous authors have argued that currency crises can arise from self-fulfilling prophecies \citep{obstfeld1996models,morris1998unique,metz2002private,hellwig2006self,guimaraes2007risk}. Relatedly, 
\citet{obstfeld1986rational} showed conditions under which  balance-of-payments crises may be ``purely self-fulfilling events rather than the inevitable result of unsustainable macroeconomic policies.''
\citet{azariadis1981self} exhibits a model of intergenerational production and consumption in which self-fulfilling prophecies constitute between one third and one half of the possible pricing equilibria.
Others have argued that self-fulfilling prophecies can give rise to 
debt crises \citep{cole2000self},
credit market freezes \citep{bebchuk2011self}, and liquidity dry-ups \citep{malherbe2014self}. On the other hand, \citet{krugman1996currency} argues that ``the actual currency experience of the 1990s does not make as strong a case for  self-fulfilling crises as has been argued by some researchers. In general, it will be very difficult to distinguish between crises that need not have happened and those that were made inevitable by concerns about future viability that seemed reasonable at the time.''  

In our view, these observations and debates are an invitation to develop a new and more rigorous formalism for the study of bounded agents who reason about themselves and each other.  As can be seen from our main theorems, such a formalism can be used to construct and model self-fulfilling prophecies using \pbl, even in single-shot games with no repetition of interactions between players.  Going further, understanding self-fulling prophecies should not be the extent of our ambition with open-source game theory.  For instance, we should also address self-defeating prophecies, as alluded to by \citet{friedman1977nobel}, or prophecies that plausibly-but-don't-quite self-fulfill, as noted by \citet{krugman1996currency}.

\subsection{Comparison to Nash Equilibria}
In this section we argue that, in real-world single-shot games, the conditions under which a genuine Nash (or correlated) equilibrium could arise are often conditions under which an open-source game (or commitment game) could be played instead, enabling more desirable equilibria.  In short, the reason is that the Nash equilibrium concept already assumes a certain degree of mutual transparency, and often that transparency could instead be used to reveal programs/plans/commitments.

For a more detailed argument, let us first reflect on the following important observation due to Moulin regarding Nash equilibria:
\begin{quote}
``Nash equilibria are self-fulfilling prophecies. If every player guesses what strategies are chosen by the others, this guess is consistent with selfish maximization of utilities if and only if all players bet on the same Nash equilibrium.  Here, we need an invisible mediator or some theory that pronounces x the rational strategic choice for Player i.''  
\citep[Ch. 5]{moulin1986game} 
\end{quote}

\noindent How does this ``pronouncement'' play out in the minds of the players?  Suppose in a game with two real-world players, before any discussion about what equilibrium they should play, Player $i$'s subjective probability distribution over Player $j$'s action is denoted $a^i_j(0)$.  
\newcommand{\br}{\textrm{BR}}
Let $BR_i$ denote player $i$'s best response function, so $b_i(0):=\br_i(a^i_{-i}(0))$ is the action distribution Player $i$ will use if no discussion happens and no prescribed equilibrium is ``pronounced.''  Then, suppose a discussion or mediation occurs between the players in which a particular  Nash equilibrium $x=(x_1,x_2)$ is prescribed to be played instead of the defaults $(b_1(0),b_2(0))$ that they would have played otherwise.  How can each player $i$ become convinced that the other is really going to play $x_{-i}$, before their action is taken?    There are (at least) two perspectives commonly taken on this question.  

The first perspective is simply to say that each player has ``no reason to deviate'' from the prescription:  ``I am not forced to follow our agreement, but as long as you guys are faithful to it, I have no incentive to betray'' \citep[Introduction]{moulin1986game}.  But this does not answer the question: what reason do the players have to believe the prescription will be carried out to begin with?  
Player 1 can think, ``If Player 2 is thinking that I (Player 1) will play $x_1$, then Player 2 will play $x_2$''... but how would Player 1 know that Player 2 knows that Player 1 is going to do that?  
Indeed, Player 1 might reason, ``My posterior says Player $2$ \emph{might} play the prescribed distribution $x_2$, but they might instead do some other thing, such as a `safe' action, or sampling from some prior distribution over actions.  Reflecting on this, my posterior over Player 2's action has updated to a new value, $a^i_j(1)$, and my best response is $b_1(1):=\br_1(a^1_{2}(1))$.''  Player 2 can update similarly, yielding sequences $a^i_j(n)$ and $b_i(n)$ as the number of rounds $n$ of discussion and reflection grows. This process is different from a simple best-response iteration, because there is also a discussion component giving rise to each $a^i_j(n)$.   Is there any reason to think it will converge on a fixed point?   And if so, does it converge on the ``pronounced'' equilibrium $(x_1,x_2)$, or some other point $(x'_1,x'_2)$?

These questions inspire a second perspective for applying the Nash equilibrium concept, which is to simply \emph{define} $(x_1,x_2)$ to be whatever the players converge to deciding and believing about each other through this discussion process, \emph{assuming it converges}.  In this view, if a mediator or compelling bargaining theory is available, perhaps it can help to ``steer'' the conversation toward a particular desirable equilibrium, rather than merely stating it once and assuming conformity.  

In both perspectives, the players \emph{end up} in a state of mutual understanding, where each is satisfied (at equilibrium) with her model of the other.  Indeed:
\begin{quote}
``Some kind of communication among the players is necessary to endow them with mutually consistent beliefs, and/or allow mutual observation of past outcomes.''
\citep[Ch. 5]{moulin1986game}
\end{quote}

\noindent In other words, by the end of the discussion, the players no longer have full privacy: they understand each other to some extent.  Why not use that understanding to make and reveal commitments to perform better than the Nash equilibrium in games like the Prisoner's Dilemma?  
That is to say, why don't the players simply make internal structural changes in the form of unexploitable cooperative commitments ({\`a} la Theorem \ref{thm:dupoc}), and then reveal those changes through the same process that would have allowed them to understand each other in the Nash case?  

If the players are either institutions or software systems capable of self-repair, such self-imposed changes are both possible and potentially legible.  If we suppose they make themselves fully transparent, a question still arises to whether their discussion of plans will converge, but as we've seen that question can sometimes be resolved using \pbl.  Specifically, when there appears to be an infinite regress of social metacognition, sometimes that regress can be collapsed into a single self-reflective observation that the equilibrium is going to be observed, along the lines of Theorem \ref{thm:dupoc}.

In summary, the kinds of mutual transparency adequate to allow Nash equilibria in one-shot games might often allow the players to reveal enough of their internal structure or ``source codes'' to enable open-source equilibria or commitment game equilibria to arise instead.

\subsection{Conclusion}
Our most important conclusion is that fundamental questions regarding the interaction of open-source agents should not be left unaddressed.  As the global economy becomes increasingly automated and artificial agents become increasingly capable, more research is needed to prepare for novel possibilities that the open-source condition might enable.  Some results in this area might also be applicable to self-fulfilling and self-defeating prophecies in the beliefs and policies of human institutions.

Theorems \ref{thm:cupod}, \ref{thm:dupoc}, 
\ref{thm:pdupoc}, 
\ref{thm:cimcic}, and 
\ref{thm:dimcid} each illustrate how one-shot interactions between open-source agents can sometimes ``short-circuit'' unbounded recursions of metacognition, 
leading to somewhat counterintuitive instances of both mutual cooperation and mutual defection.  Theorem \ref{thm:pdupoc} in particular illustrates how similar results hold for probabilistic open-source agents.  
The properties of the \dupoc agents, and correspondingly their probabilistic variants \pdupoc, are particularly interesting: \dupoc is unexploitable, and for large $k$, \dupoc and \pdupoc reward both cooperation and legibility in their opponents; they even reward the principle of \emph{rewarding} legible cooperation. 
If Open Problem \ref{prob:lob} is answered affirmatively, it might shed light on ways to implement DUPOC-like human institutions.   If Open Problem \ref{prob:holdupoc} is answered affirmatively, it might allow a practical implementation standard for DUPOC-like automated systems.  And, many scenarios that would enable enough mutual understanding to achieve a Nash equilibrium in one round might also allow enough mutual understanding to make and reveal new source code or commitments, allowing better-than-Nash outcomes.

Further open problems need to be resolved before these one-shot open-source interactions can be thoroughly understood.  
Certainly it would be satisfying to implement and investigate an efficient version of \dupoc or \cupod using HOL/ML or Coq as in Open Problem
\ref{prob:holdupoc}, or to look at population dynamics among agents with known interaction outcomes as in Open Problem \ref{prob:modalpopulation}.
In Open Problems 
\ref{prob:dupoccupod}, 
\ref{prob:cupodcimcic}, and 
\ref{prob:dupocdimcid}, there seems to be no obvious self-fulfilling prophecy that can be proven using \pbl.  
For these cases we may need new techniques, or perhaps more general theoretical results such as those described in Open Problems 
\ref{prob:lob}, 
\ref{prob:bgl}, and 
\ref{prob:haskellupgrade}.
As for the successful design of additional resource-bounded proof-based agents, such as the PrudentBot and CDEBot described in Open Problems  
\ref{prob:prudentbot} and \ref{prob:cdebot}, these designs might be easily achievable with currently available methods.  Conversely, attempts at their design might unfold into needing or generating progress on the more general theoretical results.

\bibliography{main}

\begin{thebibliography}{}

\bibitem[\protect\citeauthoryear{Ahler}{Ahler}{2014}]{ahler2014self}
Ahler, D.~J. (2014).
\newblock Self-fulfilling misperceptions of public polarization.
\newblock {\em Journal of Politics\/}~{\em 76\/}(3), 607--620. University of
  Chicago Press.
\newblock
  \href{https://www.journals.uchicago.edu/doi/full/10.1017/S0022381614000085?casa_token=-XQaTPIxC4oAAAAA:hnVeSkl1Fc1CSLcU4t1xS7Xt6MUxPChpagJnzZGrEQR3x58ifhPUowJ05AEg4ueI9ZcKvO4T7w}{(link)}.

\bibitem[\protect\citeauthoryear{Azariadis}{Azariadis}{1981}]{azariadis1981self}
Azariadis, C. (1981).
\newblock Self-fulfilling prophecies.
\newblock {\em Journal of Economic Theory\/}~{\em 25\/}(3), 380--396. Elsevier.
\newblock
  \href{http://citeseerx.ist.psu.edu/viewdoc/download?doi=10.1.1.461.2019&rep=rep1&type=pdf}{(link)}.

\bibitem[\protect\citeauthoryear{Barasz, Christiano, Fallenstein, Herreshoff,
  LaVictoire, and Yudkowsky}{Barasz et~al.}{2014}]{barasz2014robust}
Barasz, M., P.~Christiano, B.~Fallenstein, M.~Herreshoff, P.~LaVictoire, and
  E.~Yudkowsky (2014).
\newblock Robust cooperation in the prisoner's dilemma: Program equilibrium via
  provability logic.
\newblock arXiv preprint arXiv:1401.5577.
\newblock \href{https://arxiv.org/pdf/1401.5577.pdf}{(link)}.

\bibitem[\protect\citeauthoryear{Barras, Boutin, Cornes, Courant,
  Filli\^{a}tre, Gim\'{e}nez, Herbelin, Huet, Mu\~{n}oz, Murthy, et~al.}{Barras
  et~al.}{1997}]{barras1997coq}
Barras, B., S.~Boutin, C.~Cornes, J.~Courant, J.-C. Filli\^{a}tre,
  E.~Gim\'{e}nez, H.~Herbelin, G.~Huet, C.~Mu\~{n}oz, C.~Murthy, et~al. (1997).
\newblock The {C}oq proof assistant reference manual: {V}ersion 6.1.
\newblock Research Report. Inria.
\newblock \href{https://hal.inria.fr/inria-00069968/file/RT-0203.pdf}{(link)}.

\bibitem[\protect\citeauthoryear{Bebchuk and Goldstein}{Bebchuk and
  Goldstein}{2011}]{bebchuk2011self}
Bebchuk, L.~A. and I.~Goldstein (2011).
\newblock Self-fulfilling credit market freezes.
\newblock {\em Review of Financial Studies\/}~{\em 24\/}(11), 3519--3555.
  Society for Financial Studies.
\newblock
  \href{https://citeseerx.ist.psu.edu/viewdoc/download?doi=10.1.1.649.337&rep=rep1&type=pdf}{(link)}.

\bibitem[\protect\citeauthoryear{Benzm{\"u}ller and Paleo}{Benzm{\"u}ller and
  Paleo}{2016}]{benzmuller2016inconsistency}
Benzm{\"u}ller, C. and B.~W. Paleo (2016).
\newblock {The inconsistency in G{\"o}del's ontological argument: A success
  story for AI in metaphysics}.
\newblock In {\em Proceedings of the Twenty-Fifth International Joint
  Conference on Artificial Intelligence}, pp.\  936--942.
\newblock \href{https://dl.acm.org/doi/abs/10.5555/3060621.3060751}{(link)}.

\bibitem[\protect\citeauthoryear{Boolos}{Boolos}{1995}]{boolos1995logic}
Boolos, G. (1995).
\newblock {\em The Logic of Provability}.
\newblock Cambridge, UK: Cambridge University Press.
\newblock
  \href{https://www.google.com/books/edition/The_Logic_of_Provability/WekaT3OLoUcC?gbpv=1}{(link)}.

\bibitem[\protect\citeauthoryear{Bottici and Challand}{Bottici and
  Challand}{2006}]{bottici2006rethinking}
Bottici, C. and B.~Challand (2006).
\newblock Rethinking political myth: {T}he clash of civilizations as a
  self-fulfilling prophecy.
\newblock {\em European Journal of Social Theory\/}~{\em 9\/}(3), 315--336.
  Sage.
\newblock
  \href{https://journals.sagepub.com/doi/pdf/10.1177/1368431006065715}{(link)}.

\bibitem[\protect\citeauthoryear{Brophy}{Brophy}{1983}]{brophy1983research}
Brophy, J.~E. (1983).
\newblock Research on the self-fulfilling prophecy and teacher expectations.
\newblock {\em Journal of Educational Psychology\/}~{\em 75\/}(5), 631--661.
  American Psychological Association.
\newblock \href{https://files.eric.ed.gov/fulltext/ED221530.pdf}{(link)}.

\bibitem[\protect\citeauthoryear{Brynjolfsson and Mc{A}fee}{Brynjolfsson and
  Mc{A}fee}{2017}]{brynjolfsson2017business}
Brynjolfsson, E. and A.~Mc{A}fee (2017).
\newblock The business of artificial intelligence.
\newblock {\em Harvard Business Review\/}~{\em 7}, 3--11. Harvard Business
  Publishing.
\newblock
  \href{https://starlab-alliance.com/wp-content/uploads/2017/09/The-Business-of-Artificial-Intelligence.pdf}{(link)}.

\bibitem[\protect\citeauthoryear{Calo}{Calo}{2017}]{calo2017artificial}
Calo, R. (2017).
\newblock Artificial intelligence policy: {A} primer and roadmap.
\newblock {\em UC Davis Law Review\/}~{\em 51}, 399--435. UC Davis School of
  Law.
\newblock
  \href{https://static1.squarespace.com/static/5b5df2f5fcf7fd7290ff04a4/t/5b8d79a81ae6cf1d7dfb19a4/1535998377033/04+Artificial+Intelligence+Policy+-+A+Primer+and+Roadmap+%28Calo%29.pdf}{(link)}.

\bibitem[\protect\citeauthoryear{Capraro and Halpern}{Capraro and
  Halpern}{2019}]{capraro2019translucent}
Capraro, V. and J.~Y. Halpern (2019).
\newblock {Translucent players: {E}xplaining cooperative behavior in social
  dilemmas}.
\newblock {\em Rationality and Society\/}~{\em 31\/}(4), 371--408. Sage.
\newblock
  \href{https://journals.sagepub.com/doi/pdf/10.1177/1043463119885102?casa_token=1OtRgmGFKDUAAAAA:W_2O2j2qH-oFDynzEe0gnMMgz9ylXAA4a1Y5nG6VIP5skXDJlDj8eZNbRRvYD7uds5Mn7rcfQ3Q}{(link)}.

\bibitem[\protect\citeauthoryear{Chlipala}{Chlipala}{2013}]{chlipala2013certified}
Chlipala, A. (2013).
\newblock {\em {Certified programming with dependent types: A pragmatic
  introduction to the Coq proof assistant}}.
\newblock Cambridge, MA: MIT Press.
\newblock
  \href{https://www.google.com/books/edition/Certified_Programming_with_Dependent_Typ/8msTAgAAQBAJ?hl=en&gbpv=1&dq=coq+proof&pg=PR11&printsec=frontcover}{(link)}.

\bibitem[\protect\citeauthoryear{Chohan}{Chohan}{2017}]{chohan2017decentralized}
Chohan, U.~W. (2017).
\newblock The {D}ecentralized {A}utonomous {O}rganization and governance
  issues.
\newblock Discussion paper. University of New South Wales, Canberra. Available
  at SSRN 3082055.
\newblock
  \href{https://papers.ssrn.com/sol3/papers.cfm?abstract_id=3082055}{(link)}.

\bibitem[\protect\citeauthoryear{Cihon}{Cihon}{2019}]{cihon2019standards}
Cihon, P. (2019).
\newblock {Standards for {AI} governance: {I}nternational standards to enable
  global coordination in {AI} research \& development}.
\newblock Technical report, Center for the Governance of AI, Future of Humanity
  Institute, University of Oxford.
\newblock
  \href{https://www.fhi.ox.ac.uk/wp-content/uploads/Standards_-FHI-Technical-Report.pdf}{(link)}.

\bibitem[\protect\citeauthoryear{Cole and Kehoe}{Cole and
  Kehoe}{2000}]{cole2000self}
Cole, H.~L. and T.~J. Kehoe (2000).
\newblock Self-fulfilling debt crises.
\newblock {\em Review of Economic Studies\/}~{\em 67\/}(1), 91--116. Oxford
  University Press.
\newblock \href{https://core.ac.uk/reader/6717750}{(link)}.

\bibitem[\protect\citeauthoryear{Cori and Lascar}{Cori and
  Lascar}{2001}]{cori2001mathematical}
Cori, R. and D.~Lascar (2001).
\newblock {\em {Mathematical Logic: {A} Course with Exercises, Part II}}.
\newblock Oxford: Oxford University Press.

\bibitem[\protect\citeauthoryear{Critch}{Critch}{2019}]{critch2019parametric}
Critch, A. (2019).
\newblock A parametric, resource-bounded generalization of {L}\"ob’s theorem,
  and a robust cooperation criterion for open-source game theory.
\newblock {\em Journal of Symbolic Logic\/}~{\em 84\/}(4), 1368--1381.
  Cambridge University Press.
\newblock
  \href{https://www.cambridge.org/core/journals/journal-of-symbolic-logic/article/abs/parametric-resourcebounded-generalization-of-lobs-theorem-and-a-robust-cooperation-criterion-for-opensource-game-theory/16063EA7BFFEE89438631B141E556E79}{(link)}.

\bibitem[\protect\citeauthoryear{Dafoe}{Dafoe}{2018}]{dafoe2018ai}
Dafoe, A. (2018).
\newblock {AI} governance: {A} research agenda.
\newblock Technical report, Center for the Governance of AI, Future of Humanity
  Institute, University of Oxford.
\newblock
  \href{https://www.fhi.ox.ac.uk/wp-content/uploads/GovAI-Agenda.pdf}{(link)}.

\bibitem[\protect\citeauthoryear{Dafoe, Hughes, Bachrach, Collins, McKee,
  Leibo, Larson, and Graepel}{Dafoe et~al.}{2020}]{dafoe2020open}
Dafoe, A., E.~Hughes, Y.~Bachrach, T.~Collins, K.~R. McKee, J.~Z. Leibo,
  K.~Larson, and T.~Graepel (2020).
\newblock Open problems in cooperative {AI}.
\newblock arXiv preprint arXiv:2012.08630.
\newblock \href{https://arxiv.org/pdf/2012.08630.pdf}{(link)}.

\bibitem[\protect\citeauthoryear{Darley and Fazio}{Darley and
  Fazio}{1980}]{darley1980expectancy}
Darley, J.~M. and R.~H. Fazio (1980).
\newblock Expectancy confirmation processes arising in the social interaction
  sequence.
\newblock {\em American Psychologist\/}~{\em 35\/}(10), 867--881. American
  Psychological Association.
\newblock
  \href{https://www.researchgate.net/profile/Russell_Fazio/publication/232554885_Expectancy_Confirmation_Processes_Arising_in_the_Social_Interaction_Sequence/links/00b49530f3c2294759000000/Expectancy-Confirmation-Processes-Arising-in-the-Social-Interaction-Sequence.pdf}{(link)}.

\bibitem[\protect\citeauthoryear{Daryl}{Daryl}{2011}]{daryl2011modal}
Daryl, S. (2011).
\newblock Modal proof of {L}\"ob's theorem.
\newblock Wikipedia.
\newblock
  \href{https://en.wikipedia.org/w/index.php?title=Lob\%27s_theorem&oldid=430399108\#Modal_Proof_of_Lob's_theorem}{(link)}.

\bibitem[\protect\citeauthoryear{Dirican}{Dirican}{2015}]{dirican2015impacts}
Dirican, C. (2015).
\newblock The impacts of robotics, artificial intelligence on business and
  economics.
\newblock {\em Procedia - Social and Behavioral Sciences\/}~{\em 195},
  564--573. Elsevier.
\newblock
  \href{https://pdf.sciencedirectassets.com/277811/1-s2.0-S1877042815X00334/1-s2.0-S1877042815036137/main.pdf}{(link)}.

\bibitem[\protect\citeauthoryear{DuPont}{DuPont}{2017}]{dupont2017experiments}
DuPont, Q. (2017).
\newblock Experiments in algorithmic governance: A history and ethnography of
  “{T}he {DAO},” a failed decentralized autonomous organization.
\newblock In M.~Campbell-Verduyn (Ed.), {\em Bitcoin and Beyond}, pp.\
  157--177. New York: Routledge.
\newblock
  \href{https://moodle.epfl.ch/pluginfile.php/2861870/mod_resource/content/1/DUPONT-2017-Preprint-Algorithmic-Governance.pdf}{(link)}.

\bibitem[\protect\citeauthoryear{Dwivedi, Hughes, Ismagilova, Aarts, Coombs,
  Crick, Duan, Dwivedi, Edwards, Eirug, et~al.}{Dwivedi
  et~al.}{2019}]{dwivedi2019artificial}
Dwivedi, Y.~K., L.~Hughes, E.~Ismagilova, G.~Aarts, C.~Coombs, T.~Crick,
  Y.~Duan, R.~Dwivedi, J.~Edwards, A.~Eirug, et~al. (2019).
\newblock Artificial intelligence ({AI}): {M}ultidisciplinary perspectives on
  emerging challenges, opportunities, and agenda for research, practice and
  policy.
\newblock {\em International Journal of Information Management\/}~{\em 57},
  101994. Elsevier.
\newblock
  \href{https://research.cbs.dk/en/publications/artificial-intelligence-ai-multidisciplinary-perspectives-on-emer}{(link)}.

\bibitem[\protect\citeauthoryear{Eden}{Eden}{1984}]{eden1984self}
Eden, D. (1984).
\newblock Self-fulfilling prophecy as a management tool: {H}arnessing
  {P}ygmalion.
\newblock {\em Academy of Management Review\/}~{\em 9\/}(1), 64--73. Academy of
  Management.
\newblock \href{https://www.jstor.org/stable/pdf/258233.pdf}{(link)}.

\bibitem[\protect\citeauthoryear{Eden}{Eden}{1992}]{eden1992leadership}
Eden, D. (1992).
\newblock Leadership and expectations: {P}ygmalion effects and other
  self-fulfilling prophecies in organizations.
\newblock {\em Leadership Quarterly\/}~{\em 3\/}(4), 271--305. Elsevier.
\newblock
  \href{https://langleygroupinstitute.com/wp-content/uploads/Eden-1992-Leadership-and-expectations-Pygmalion-effects-and-other-self-fulfilling-prophecies-in-organizations.pdf}{(link)}.

\bibitem[\protect\citeauthoryear{Ferraro, Pfeffer, and Sutton}{Ferraro
  et~al.}{2005}]{ferraro2005economics}
Ferraro, F., J.~Pfeffer, and R.~I. Sutton (2005).
\newblock Economics language and assumptions: {H}ow theories can become
  self-fulfilling.
\newblock {\em Academy of Management Review\/}~{\em 30\/}(1), 8--24. Academy of
  Management.
\newblock \href{https://www.jstor.org/stable/pdf/20159091.pdf}{(link)}.

\bibitem[\protect\citeauthoryear{Friedman}{Friedman}{1977}]{friedman1977nobel}
Friedman, M. (1977).
\newblock Nobel lecture: {I}nflation and unemployment.
\newblock {\em Journal of Political Economy\/}~{\em 85\/}(3), 451--472.
  University of Chicago Press.
\newblock
  \href{https://www.jstor.org/stable/pdf/1830192.pdf?casa_token=SsxAwyaLz8cAAAAA:IfTYzlWG47PL5BRRSNcQjk0JfeuIf05llwt24fkZyXOYI8EIL8zGukDDagrj6CMmC-U4jZXjANFQhG0D6HsWWcfhBCI6XhtUIosp8KwVjNRitw2bUA}{(link)}.

\bibitem[\protect\citeauthoryear{Frisell}{Frisell}{2009}]{frisell2009theory}
Frisell, L. (2009).
\newblock A theory of self-fulfilling political expectations.
\newblock {\em Journal of Public Economics\/}~{\em 93\/}(5-6), 715--720.
  Elsevier.
\newblock
  \href{http://citeseerx.ist.psu.edu/viewdoc/download?doi=10.1.1.526.7116&rep=rep1&type=pdf}{(link)}.

\bibitem[\protect\citeauthoryear{Gasser and Almeida}{Gasser and
  Almeida}{2017}]{gasser2017layered}
Gasser, U. and V.~A. Almeida (2017).
\newblock A layered model for {AI} governance.
\newblock {\em IEEE Internet Computing\/}~{\em 21\/}(6), 58--62. IEEE.
\newblock \href{https://ieeexplore.ieee.org/document/8114684?denied=}{(link)}.

\bibitem[\protect\citeauthoryear{Grace, Salvatier, Dafoe, Zhang, and
  Evans}{Grace et~al.}{2018}]{grace2018will}
Grace, K., J.~Salvatier, A.~Dafoe, B.~Zhang, and O.~Evans (2018).
\newblock When will {AI} exceed human performance? {E}vidence from {AI}
  experts.
\newblock {\em Journal of Artificial Intelligence Research\/}~{\em 62},
  729--754. AI Access Foundation.
\newblock
  \href{https://www.jair.org/index.php/jair/article/view/11222}{(link)}.

\bibitem[\protect\citeauthoryear{Guimaraes and Morris}{Guimaraes and
  Morris}{2007}]{guimaraes2007risk}
Guimaraes, B. and S.~Morris (2007).
\newblock Risk and wealth in a model of self-fulfilling currency attacks.
\newblock {\em Journal of Monetary Economics\/}~{\em 54\/}(8), 2205--2230.
  Elsevier.
\newblock
  \href{http://eprints.lse.ac.uk/4804/1/risk_and_wealth_in_a_model_of_self_fulfilling_currency_attacksWPLSEROVERSION.pdf}{(link)}.

\bibitem[\protect\citeauthoryear{Halpern and Pass}{Halpern and
  Pass}{2018}]{halpern2018game}
Halpern, J.~Y. and R.~Pass (2018).
\newblock Game theory with translucent players.
\newblock {\em International Journal of Game Theory\/}~{\em 47\/}(3), 949--976.
  Springer.
\newblock
  \href{https://link.springer.com/article/10.1007/s00182-018-0626-x}{(link)}.

\bibitem[\protect\citeauthoryear{Hatcher and Yu}{Hatcher and
  Yu}{2018}]{hatcher2018survey}
Hatcher, W.~G. and W.~Yu (2018).
\newblock A survey of deep learning: {P}latforms, applications and emerging
  research trends.
\newblock {\em IEEE Access\/}~{\em 6}, 24411--24432. IEEE.
\newblock
  \href{https://ieeexplore.ieee.org/stamp/stamp.jsp?arnumber=8351898}{(link)}.

\bibitem[\protect\citeauthoryear{Hellwig, Mukherji, and Tsyvinski}{Hellwig
  et~al.}{2006}]{hellwig2006self}
Hellwig, C., A.~Mukherji, and A.~Tsyvinski (2006).
\newblock Self-fulfilling currency crises: {T}he role of interest rates.
\newblock {\em American Economic Review\/}~{\em 96\/}(5), 1769--1787. American
  Economic Association.
\newblock
  \href{https://citeseerx.ist.psu.edu/viewdoc/download?doi=10.1.1.422.8676&rep=rep1&type=pdf}{(link)}.

\bibitem[\protect\citeauthoryear{Houghton}{Houghton}{2009}]{houghton2009role}
Houghton, D.~P. (2009).
\newblock The role of self-fulfilling and self-negating prophecies in
  international relations.
\newblock {\em International Studies Review\/}~{\em 11\/}(3), 552--584. Oxford
  University Press.
\newblock \href{https://watermark.silverchair.com/11-3-552.pdf}{(link)}.

\bibitem[\protect\citeauthoryear{Huang and Rust}{Huang and
  Rust}{2018}]{huang2018artificial}
Huang, M.-H. and R.~T. Rust (2018).
\newblock Artificial intelligence in service.
\newblock {\em Journal of Service Research\/}~{\em 21\/}(2), 155--172. Sage.
\newblock
  \href{https://journals.sagepub.com/doi/pdf/10.1177/1094670517752459}{(link)}.

\bibitem[\protect\citeauthoryear{Jentzsch}{Jentzsch}{2016}]{jentzsch2016decentralized}
Jentzsch, C. (2016).
\newblock Decentralized {A}utonomous {O}rganization to automate governance.
\newblock White paper. Slock.IT.
\newblock
  \href{https://lawofthelevel.lexblogplatformthree.com/wp-content/uploads/sites/187/2017/07/WhitePaper-1.pdf}{(link)}.

\bibitem[\protect\citeauthoryear{Jussim, Eccles, and Madon}{Jussim
  et~al.}{1996}]{jussim1996social}
Jussim, L., J.~Eccles, and S.~Madon (1996).
\newblock Social perception, social stereotypes, and teacher expectations:
  {A}ccuracy and the quest for the powerful self-fulfilling prophecy.
\newblock In M.~P. Zanna (Ed.), {\em Advances in Experimental Social
  Psychology}, Volume~28, pp.\  281--388. San Diego: Academic Press.
\newblock \href{https://psycnet.apa.org/record/2003-02323-006}{(link)}.

\bibitem[\protect\citeauthoryear{Jussim and Harber}{Jussim and
  Harber}{2005}]{jussim2005teacher}
Jussim, L. and K.~D. Harber (2005).
\newblock Teacher expectations and self-fulfilling prophecies: {K}nowns and
  unknowns, resolved and unresolved controversies.
\newblock {\em Personality and Social Psychology Review\/}~{\em 9\/}(2),
  131--155. Sage.
\newblock
  \href{https://www.gwern.net/docs/statistics/bias/2005-jussim.pdf}{(link)}.

\bibitem[\protect\citeauthoryear{Kalai, Kalai, Lehrer, and Samet}{Kalai
  et~al.}{2010}]{kalai2010commitment}
Kalai, A.~T., E.~Kalai, E.~Lehrer, and D.~Samet (2010).
\newblock A commitment folk theorem.
\newblock {\em Games and Economic Behavior\/}~{\em 69\/}(1), 127--137.
  Elsevier.
\newblock
  \href{http://www.math.tau.ac.il/~lehrer/Papers/commitment%20folk%20theorem.pdf}{(link)}.

\bibitem[\protect\citeauthoryear{Kamada and Kandori}{Kamada and
  Kandori}{2020}]{kamada2020revision}
Kamada, Y. and M.~Kandori (2020).
\newblock Revision games.
\newblock {\em Econometrica\/}~{\em 88\/}(4), 1599--1630. The Econometric
  Society.
\newblock
  \href{https://onlinelibrary.wiley.com/doi/full/10.3982/ECTA15272}{(link)}.

\bibitem[\protect\citeauthoryear{Kelley and Stahelski}{Kelley and
  Stahelski}{1970}]{kelley1970social}
Kelley, H.~H. and A.~J. Stahelski (1970).
\newblock Social interaction basis of cooperators' and competitors' beliefs
  about others.
\newblock {\em Journal of Personality and Social Psychology\/}~{\em 16\/}(1),
  66--91. American Psychological Association.
\newblock \href{https://psycnet.apa.org/record/1971-02325-001}{(link)}.

\bibitem[\protect\citeauthoryear{Klein, Elphinstone, Heiser, Andronick, Cock,
  Derrin, Elkaduwe, Engelhardt, Kolanski, Norrish, et~al.}{Klein
  et~al.}{2009}]{klein2009sel4}
Klein, G., K.~Elphinstone, G.~Heiser, J.~Andronick, D.~Cock, P.~Derrin,
  D.~Elkaduwe, K.~Engelhardt, R.~Kolanski, M.~Norrish, et~al. (2009).
\newblock {seL4}: {F}ormal verification of an {OS} kernel.
\newblock In {\em Proceedings of the ACM SIGOPS 22nd Symposium on Operating
  Systems Principles}, pp.\  207--220.
\newblock \href{https://dl.acm.org/doi/pdf/10.1145/1629575.1629596}{(link)}.

\bibitem[\protect\citeauthoryear{Krugman}{Krugman}{1996}]{krugman1996currency}
Krugman, P. (1996).
\newblock Are currency crises self-fulfilling?
\newblock {\em NBER Macroeconomics Annual\/}~{\em 11}, 345--378. University of
  Chicago Press.
\newblock
  \href{https://www.journals.uchicago.edu/doi/pdf/10.1086/654311}{(link)}.

\bibitem[\protect\citeauthoryear{LaVictoire, Fallenstein, Yudkowsky, Barasz,
  Christiano, and Herreshoff}{LaVictoire et~al.}{2014}]{lavictoire2014program}
LaVictoire, P., B.~Fallenstein, E.~Yudkowsky, M.~Barasz, P.~Christiano, and
  M.~Herreshoff (2014).
\newblock Program equilibrium in the prisoner’s dilemma via {L}{\"o}b’s
  theorem.
\newblock In {\em AAAI Workshop on Multiagent Interaction without Prior
  Coordination}. AAAI.
\newblock
  \href{https://citeseerx.ist.psu.edu/viewdoc/download?doi=10.1.1.480.4411&rep=rep1&type=pdf}{(link)}.

\bibitem[\protect\citeauthoryear{Madhavan}{Madhavan}{2012}]{madhavan2012exchange}
Madhavan, A. (2012).
\newblock Exchange-traded funds, market structure, and the flash crash.
\newblock {\em Financial Analysts Journal\/}~{\em 68\/}(4), 20--35. Taylor \&
  Francis.
\newblock
  \href{https://www.tandfonline.com/doi/pdf/10.2469/faj.v68.n4.6}{(link)}.

\bibitem[\protect\citeauthoryear{Makridakis}{Makridakis}{2017}]{makridakis2017forthcoming}
Makridakis, S. (2017).
\newblock The forthcoming artificial intelligence ({AI}) revolution: {I}ts
  impact on society and firms.
\newblock {\em Futures\/}~{\em 90}, 46--60. Elsevier.
\newblock
  \href{http://hephaestus.nup.ac.cy/bitstream/handle/11728/9254/THE_FORTHCOMING...2017_Full.pdf?isAllowed=y&sequence=1}{(link)}.

\bibitem[\protect\citeauthoryear{Malherbe}{Malherbe}{2014}]{malherbe2014self}
Malherbe, F. (2014).
\newblock Self-fulfilling liquidity dry-ups.
\newblock {\em Journal of Finance\/}~{\em 69\/}(2), 947--970. Wiley.
\newblock
  \href{https://onlinelibrary.wiley.com/doi/pdf/10.1111/jofi.12063?casa_token=BaOxAOCKiSIAAAAA:xjmbLwpq4h_zvjh_32mv7-QVf1vFYHoPx5_JmGQpftzwTElUZeM4Vm-snfMuLKkc3rzoPIEovNQuiw}{(link)}.

\bibitem[\protect\citeauthoryear{Merton}{Merton}{1948}]{merton1948self}
Merton, R.~K. (1948).
\newblock The self-fulfilling prophecy.
\newblock {\em The Antioch Review\/}~{\em 8\/}(2), 193--210. Antioch Review.
\newblock \href{https://www.jstor.org/stable/pdf/4609267.pdf}{(link)}.

\bibitem[\protect\citeauthoryear{Metz}{Metz}{2002}]{metz2002private}
Metz, C.~E. (2002).
\newblock Private and public information in self-fulfilling currency crises.
\newblock {\em Journal of Economics\/}~{\em 76\/}(1), 65--85. Springer.
\newblock
  \href{https://www.econstor.eu/bitstream/10419/40288/1/321978315.pdf}{(link)}.

\bibitem[\protect\citeauthoryear{Morris and Shin}{Morris and
  Shin}{1998}]{morris1998unique}
Morris, S.~E. and H.~S. Shin (1998).
\newblock Unique equilibrium in a model of self-fulfilling currency attacks.
\newblock {\em American Economic Review\/}~{\em 88\/}(3), 587--597. American
  Economic Association.
\newblock \href{https://www.jstor.org/stable/pdf/116850.pdf}{(link)}.

\bibitem[\protect\citeauthoryear{Moulin}{Moulin}{1986}]{moulin1986game}
Moulin, H. (1986).
\newblock {\em Game Theory for the Social Sciences}.
\newblock New York: NYU Press.
\newblock
  \href{https://www.google.com/books/edition/Game_Theory_for_the_Social_Sciences/96ewJBssYKEC}{(link)}.

\bibitem[\protect\citeauthoryear{M\"uller and Bostrom}{M\"uller and
  Bostrom}{2016}]{muller2016future}
M\"uller, V.~C. and N.~Bostrom (2016).
\newblock Future progress in artificial intelligence: A survey of expert
  opinion.
\newblock In V.~C. M\"uller (Ed.), {\em Fundamental Issues of Artificial
  Intelligence}, pp.\  555--572. Springer.
\newblock \href{https://philpapers.org/archive/MLLFPI.pdf}{(link)}.

\bibitem[\protect\citeauthoryear{Nipkow, Paulson, and Wenzel}{Nipkow
  et~al.}{2002}]{nipkow2002isabelle}
Nipkow, T., L.~C. Paulson, and M.~Wenzel (2002).
\newblock {\em Isabelle/HOL: A proof assistant for higher-order logic}, Volume
  2283 of {\em Lecture Notes in Computer Science}.
\newblock Berlin: Springer.
\newblock
  \href{https://www.google.com/books/edition/Isabelle_HOL/R6ul20M6nTIC?hl=en&gbpv=1&dq=hol+proof&pg=PR2&printsec=frontcover}{(link)}.

\bibitem[\protect\citeauthoryear{Obstfeld}{Obstfeld}{1986}]{obstfeld1986rational}
Obstfeld, M. (1986).
\newblock Rational and self-fulfilling balance-of-payments crises.
\newblock {\em American Economic Review\/}~{\em 76\/}(1), 72--81. American
  Economic Association.
\newblock
  \href{https://www.jstor.org/stable/1804128?seq=1#metadata_info_tab_contents}{(link)}.

\bibitem[\protect\citeauthoryear{Obstfeld}{Obstfeld}{1996}]{obstfeld1996models}
Obstfeld, M. (1996).
\newblock Models of currency crises with self-fulfilling features.
\newblock {\em European Economic Review\/}~{\em 40\/}(3-5), 1037--1047.
  Elsevier.
\newblock
  \href{https://citeseerx.ist.psu.edu/viewdoc/download?doi=10.1.1.423.1051&rep=rep1&type=pdf}{(link)}.

\bibitem[\protect\citeauthoryear{Oesterheld}{Oesterheld}{2019}]{oesterheld2019robust}
Oesterheld, C. (2019).
\newblock Robust program equilibrium.
\newblock {\em Theory and Decision\/}~{\em 86\/}(1), 143--159. Springer.
\newblock
  \href{https://link.springer.com/article/10.1007/s11238-018-9679-3}{(link)}.

\bibitem[\protect\citeauthoryear{Paulson}{Paulson}{2015}]{paulson2015mechanised}
Paulson, L.~C. (2015).
\newblock {A mechanised proof of G{\"o}del’s incompleteness theorems using
  Nominal Isabelle}.
\newblock {\em Journal of Automated Reasoning\/}~{\em 55\/}(1), 1--37.
  Springer.
\newblock
  \href{https://www.repository.cam.ac.uk/bitstream/handle/1810/247014/Paulson_A%20Mechanised%20Proof%20of%20G%F6del's%20Incompleteness%20Theorems_2015_%20Journal%20of%20Automated%20Reasoning.pdf?sequence=1}{(link)}.

\bibitem[\protect\citeauthoryear{Peters}{Peters}{2014}]{peters2014competing}
Peters, M. (2014).
\newblock Competing mechanisms.
\newblock {\em Canadian Journal of Economics\/}~{\em 47\/}(2), 373--397. Wiley.
\newblock
  \href{https://onlinelibrary.wiley.com/doi/pdf/10.3982/ECTA7005}{(link)}.

\bibitem[\protect\citeauthoryear{Peters and Troncoso-Valverde}{Peters and
  Troncoso-Valverde}{2013}]{peters2013folk}
Peters, M. and C.~Troncoso-Valverde (2013).
\newblock A folk theorem for competing mechanisms.
\newblock {\em Journal of Economic Theory\/}~{\em 148\/}(3), 953--973.
  Elsevier.
\newblock
  \href{https://citeseerx.ist.psu.edu/viewdoc/download?doi=10.1.1.486.9465&rep=rep1&type=pdf}{(link)}.

\bibitem[\protect\citeauthoryear{Rist}{Rist}{1970}]{rist1970student}
Rist, R. (1970).
\newblock Student social class and teacher expectations: {T}he self-fulfilling
  prophecy in ghetto education.
\newblock {\em Harvard Educational Review\/}~{\em 40\/}(3), 411--451. Harvard
  Education Publishing Group.
\newblock
  \href{https://citeseerx.ist.psu.edu/viewdoc/download?doi=10.1.1.462.9612&rep=rep1&type=pdf}{(link)}.

\bibitem[\protect\citeauthoryear{Shoham}{Shoham}{2008}]{shoham2008computer}
Shoham, Y. (2008).
\newblock Computer science and game theory.
\newblock {\em Communications of the ACM\/}~{\em 51\/}(8), 74--79. ACM.
\newblock
  \href{https://scholar.google.com/scholar?cluster=11562379483805573301&hl=en&as_sdt=0,5}{(link)}.

\bibitem[\protect\citeauthoryear{Smullyan}{Smullyan}{1986}]{smullyan1986logicians}
Smullyan, R.~M. (1986).
\newblock Logicians who reason about themselves.
\newblock In {\em Theoretical Aspects of Reasoning about Knowledge: Proceedings
  of the 1986 Conference}, pp.\  341--352.

\bibitem[\protect\citeauthoryear{Snyder, Tanke, and Berscheid}{Snyder
  et~al.}{1977}]{snyder1977social}
Snyder, M., E.~D. Tanke, and E.~Berscheid (1977).
\newblock Social perception and interpersonal behavior: {O}n the
  self-fulfilling nature of social stereotypes.
\newblock {\em Journal of Personality and Social Psychology\/}~{\em 35\/}(9),
  656--666. American Psychological Association.
\newblock
  \href{https://citeseerx.ist.psu.edu/viewdoc/download?doi=10.1.1.335.3131&rep=rep1&type=pdf}{(link)}.

\bibitem[\protect\citeauthoryear{\'{S}wi{\k{e}}tek}{\'{S}wi{\k{e}}tek}{2017}]{swiketek2017yemen}
\'{S}wi{\k{e}}tek, H. (2017).
\newblock The {Y}emen {W}ar: {A} proxy war, or a self-fulfilling prophecy?
\newblock {\em Polish Quarterly of International Affairs\/}~{\em 26\/}(2),
  38--54. Polski Instytut Spraw Mi{\k{e}}dzynarodowych.
\newblock
  \href{https://www.jstor.org/stable/pdf/2112523.pdf?casa_token=hz92kOfsf60AAAAA:VT1b64-ORiO7gMXE2UE11uyDuYjT20Paz4uxFDSn7Cug5Vq2PMijhZcxbgz0LL87IxIbGgsH4Fv9lxEh090geb8sPdkODWbEzlpfCDg24oQAr0CX3g}{(link)}.

\bibitem[\protect\citeauthoryear{Tennenholtz}{Tennenholtz}{2004}]{tennenholtz2004program}
Tennenholtz, M. (2004).
\newblock Program equilibrium.
\newblock {\em Games and Economic Behavior\/}~{\em 49\/}(2), 363--373.
  Elsevier.
\newblock \href{https://ie.technion.ac.il/~moshet/progeqnote4.pdf}{(link)}.

\bibitem[\protect\citeauthoryear{Verhoeven}{Verhoeven}{2009}]{verhoeven2009self}
Verhoeven, H. (2009).
\newblock The self-fulfilling prophecy of failed states: {S}omalia, state
  collapse and the {G}lobal {W}ar on {T}error.
\newblock {\em Journal of Eastern African Studies\/}~{\em 3\/}(3), 405--425.
  Taylor \& Francis.
\newblock
  \href{https://www.tandfonline.com/doi/pdf/10.1080/17531050903273719?casa_token=F4FKlHasxEEAAAAA:cbaqqcD5X7fl1ET-5nxXjR7pVNpilDlgi7dDJaRL6QUPbzts64hmhN21qlyvo7T6rOKk0eYoBlg}{(link)}.

\bibitem[\protect\citeauthoryear{Wilkins}{Wilkins}{1976}]{wilkins1976concept}
Wilkins, W.~E. (1976).
\newblock The concept of a self-fulfilling prophecy.
\newblock {\em Sociology of Education\/}~{\em 49\/}(2), 175--183. American
  Sociological Association.
\newblock
  \href{https://www.jstor.org/stable/pdf/2112523.pdf?casa_token=hz92kOfsf60AAAAA:VT1b64-ORiO7gMXE2UE11uyDuYjT20Paz4uxFDSn7Cug5Vq2PMijhZcxbgz0LL87IxIbGgsH4Fv9lxEh090geb8sPdkODWbEzlpfCDg24oQAr0CX3g}{(link)}.

\bibitem[\protect\citeauthoryear{Wineburg}{Wineburg}{1987}]{wineburg1987self}
Wineburg, S.~S. (1987).
\newblock The self-fulfillment of the self-fulfilling prophecy.
\newblock {\em Educational Researcher\/}~{\em 16\/}(9), 28--37. Sage.
\newblock
  \href{https://journals.sagepub.com/doi/pdf/10.3102/0013189X016009028}{(link)}.

\bibitem[\protect\citeauthoryear{Wirtz, Patterson, Kunz, Gruber, Lu, Paluch,
  and Martins}{Wirtz et~al.}{2018}]{wirtz2018brave}
Wirtz, J., P.~G. Patterson, W.~H. Kunz, T.~Gruber, V.~N. Lu, S.~Paluch, and
  A.~Martins (2018).
\newblock Brave new world: {S}ervice robots in the frontline.
\newblock {\em Journal of Service Management\/}~{\em 29\/}(5), 907--931.
  Emerald Publishing Limited.
\newblock
  \href{https://www.emerald.com/insight/content/doi/10.1108/JOSM-04-2018-0119/full/html}{(link)}.

\bibitem[\protect\citeauthoryear{Word, Zanna, and Cooper}{Word
  et~al.}{1974}]{word1974nonverbal}
Word, C.~O., M.~P. Zanna, and J.~Cooper (1974).
\newblock The nonverbal mediation of self-fulfilling prophecies in interracial
  interaction.
\newblock {\em Journal of Experimental Social Psychology\/}~{\em 10\/}(2),
  109--120. Elsevier.
\newblock
  \href{https://www.safebrowse.io/warn.html?url=http://perpus.univpancasila.ac.id/repository/EBUPT190480.pdf&token=d79c00c6#page=398}{(link)}.

\bibitem[\protect\citeauthoryear{Yamashita}{Yamashita}{2010}]{yamashita2010mechanism}
Yamashita, T. (2010).
\newblock Mechanism games with multiple principals and three or more agents.
\newblock {\em Econometrica\/}~{\em 78\/}(2), 791--801. Wiley.
\newblock
  \href{https://onlinelibrary.wiley.com/doi/pdf/10.3982/ECTA7005}{(link)}.

\bibitem[\protect\citeauthoryear{Zulaika}{Zulaika}{2003}]{zulaika2003self}
Zulaika, J. (2003).
\newblock The self-fulfilling prophecies of counterterrorism.
\newblock {\em Radical History Review\/}~{\em 85}, 191--199. Duke University
  Press.
\newblock
  \href{https://muse.jhu.edu/article/37745/pdf?casa_token=QicST6ZyLKcAAAAA:an1ctdPPQu_sh1ef5q9DFY-vQkizBhCuzeyVjLedpEyQGc8iN6ckkkQ3_vvszsSP-VOprYdw}{(link)}.

\end{thebibliography}

\appendix

\section{Appendix functions}

\begin{minted}{python}
## file appendix.py
import string

# string_generator iterates lexicographically through all strings
# up to a given length:
def string_generator(length_bound):
    char_set = string.printable
    array = [0]
    char_pos = 1 # end of the string
    while True:
        if array[-char_pos] == len(char_set):
            if char_pos == length_bound:
                break
            for i in range(1,char_pos+1):
                array[-i] = 0
            char_pos += 1
            if char_pos > len(array):
                array = [0]+array
            else:
                array[-char_pos] += 1
            continue
        yield ''.join(char_set[i] for i in array)
        char_pos = 1
        array[-char_pos] += 1
        continue
\end{minted}

\section{Proof system assumptions}\label{app:proofsystem}

We have assumed that the proof system $S$ employed by \Verb{proof_checker} has the following properties typical of real-world proof systems, as discussed by \citet[\S 2.2]{critch2019parametric}:
\begin{enumerate}[label=\alph*)]
    \item $S$ can write down expressions that represent arbitrary computable functions.
    \item $S$ can write down a number $k$ using $\mathcal{O}(\lg(k))$ characters.
    \item $S$ allows for the definition and expansion of abbreviations in the middle of proofs.
    \item \label{item:proofexp} There exists a constant $e^*$ with the following property: Suppose we are given a proof $\rho$ that is $k$ characters long.  Then, it is possible to write out another proof $E(\rho)$ of length at most $e^*\cdot k$, that checks the steps of $\rho$ and verifies that $\rho$ is a valid proof \citep[\S 4.2]{critch2019parametric}.  We call this number $e^*$ a ``proof expansion constant.''  In the notation and terminology of \citet[Definition 4.1]{critch2019parametric}, it defines a ``proof expansion function'', $e(k) = e^*\cdot k$.
\end{enumerate}

Note that \citet{critch2019parametric} operates under the more general assumption that $e(k)$ can be any increasing computable function of $k$.  However, since proofs written in realistic formal proof systems can be checked in linear time \citep{critch2019parametric}, we focus in this paper on the simpler special case where $e(k)$ can be taken to be a linear function $e^*\cdot k$.

    

\auxdef{probcount}{\numberstring{problem}}

\end{document}
\endinput